\numberwithin{equation}{section}
\numberwithin{figure}{section}
\theoremstyle{plain}
\newtheorem{thm}{\protect\theoremname}[section]
\theoremstyle{definition}
\theoremstyle{remark}
\newtheorem{rem}[thm]{\protect\remarkname}
\theoremstyle{plain}
\newtheorem{lem}[thm]{\protect\lemmaname}
\theoremstyle{plain}
\theoremstyle{plain}
\newtheorem{cor}[thm]{\protect\corollaryname}
\theoremstyle{plain}
\newtheorem{hyp}[thm]{\protect\hypothesisname}
\theoremstyle{plain}
\newtheorem{ex}[thm]{\protect\examplename}
\providecommand{\corollaryname}{Corollary}
\providecommand{\definitionname}{Definition}
\providecommand{\lemmaname}{Lemma}
\providecommand{\propositionname}{Proposition}
\providecommand{\remarkname}{Remark}
\providecommand{\theoremname}{Theorem}
\providecommand{\hypothesisname}{Hypothesis}
\providecommand{\examplename}{Example}
\newcommand{\hatW}{\widehat{W}}
\newcommand{\cD}{\mathcal{D}}
\newcommand{\cE}{\mathcal{E}}
\newcommand{\cF}{\mathcal{F}}
\newcommand{\cK}{\mathcal{K}}
\newcommand{\EE}{\mathbb{E}}
\newcommand{\NN}{\mathbb{N}}
\newcommand{\PP}{\mathbb{P}}
\newcommand{\QQ}{\mathbb{Q}}
\newcommand{\RR}{\mathbb{R}}
\newcommand{\dd}{\mathrm{d}}
\newcommand{\coloneqq}{:=}
\newcommand{\var}{\mathrm{Var}}
\newcommand{\Erm}{\mathrm{E}}
\newcommand{\f}{\frac}
\DeclareMathOperator{\cov}{cov}
\begin{document}
 \setcounter{tocdepth}{1}

 \hypersetup{
     colorlinks   = true,
     citecolor    = blue,
     linkcolor    = blue
}

\title{Log-modulated rough stochastic volatility models}
\author{C. Bayer \and F. Harang \and P. Pigato }
\date{\today}

\keywords{rough volatility models, stochastic volatility, rough Bergomi model, implied skew, fractional Brownian motion, log Brownian motion}
\subjclass[2010]{Primary 91G30; Secondary 60G22}

\thanks{\emph{Acknowledgments.} We are grateful to M. Fukasawa, J. Gatheral and A. Gulisashvili for valuable comments and support.
F. Harang gratefully acknowledges financial support from the STORM project 274410, funded by the Research Council of Norway. C.~Bayer gratefully acknowledges support by the German research council DFG via the cluster of excellence MATH+, project AA4-2.}

\address{Christian Bayer: Weierstrass Institute of Applied Analysis and Stochastics, Mohrenstr. 39, 10117 Berlin, Germany}
\email{christian.bayer@wias-berlin.de}

\address{Fabian A. Harang:
 Department of Mathematics, University of Oslo, P.O. box 1053, Blindern, 0316, OSLO, Norway}
\email{fabianah@math.uio.no} 

\address{Paolo Pigato: 
Department of Economics and Finance, University of Rome Tor Vergata, Via Columbia 2, 00133 Roma, Italy}
\email{paolo.pigato@uniroma2.it}

\begin{abstract}
    We propose a new class of rough stochastic volatility models obtained by modulating the power-law kernel defining the fractional Brownian motion (fBm) by a logarithmic term, such that the kernel retains square integrability even in the limit case of vanishing Hurst index $H$. The so-obtained log-modulated fractional Brownian motion (log-fBm) is a continuous Gaussian process even for $H = 0$. As a consequence, the resulting \emph{super-rough stochastic volatility models} can be analysed over the whole range $0 \le H < 1/2$ without the need of further normalization. We obtain skew asymptotics of the form $\log(1/T)^{-p} T^{H-1/2}$ as $T\to 0$, $H \ge 0$, so no flattening of the skew occurs as $H \to 0$.
\end{abstract}

\maketitle


\section{Introduction}

Prompted by new insights about the regularity of instantaneous variance obtained from realized variance data (see \cite{GJR18,BLP16,FTW19}), \emph{rough} stochastic volatility models have become more and more popular in the financial literature. Loosely speaking, these are stochastic volatility models
\begin{equation}
    \label{eq:stoch-vol}
    \dd S_t = S_t \sqrt{v_t} \dd B_t,
\end{equation}
where the logarithm of the instantaneous variance process $v$ roughly behaves like a fractional Brownian motion (fBm) with Hurst index $0 < H < 1/2$. One of the attractive features of rough volatility models is that they can explain the long-established power-law explosion of the ATM skew of options as time-to-maturity $T \to 0$ and, thus, provide excellent fits to the implied volatility surface, as was observed in \cite{BFG16}, but already anticipated much earlier in \cite{ALV07,fukasawa2011}. Hence, rough volatility models provide a framework which allows to get excellent fits to market data simultaneously w.r.t. to time series of prices of the underlying and to option prices, with few parameters.

Popular rough volatility models are either explicitly defined in terms of fBm, or rather in terms of a Volterra equation. Examples of the former case include the \emph{rough Bergomi} model of \cite{BFG16}, where the variance process is given  of the form
\begin{equation}
    \label{eq:rbergomi}
    v_t = \xi(t) \exp\left( \eta W^H_t - \f{\eta^2}{2} t^{2H}  \right).
\end{equation}
Here $\xi(t)$ denotes the forward variance curve, $W^H_t$ denotes the \emph{Riemann-Liouville} fBm, i.e., the Volterra process defined by
\begin{equation}
    \label{eq:RL-kernel}
    W^H_t \coloneqq \int_0^t K(t-s) \dd W_s, \quad K(r) \coloneqq \sqrt{2H} r^{H-1/2}, \quad r>0,
\end{equation}
where $W$ denotes a standard Bm correlated with the Bm $B$ with correlation coefficient $\rho$. As an example for the second type of model, in \cite{EuchRosenbaum2019} the authors  consider a \emph{rough Heston} model, where
\begin{equation}\label{eq:rCIR}
    v_t = v_0 + \f{1}{\Gamma(H+1/2)} \int_0^t (t-s)^{H-1/2} \lambda (\theta - v_s) \dd s + \f{1}{\Gamma(H+1/2)} \int_0^t (t-s)^{H-1/2} \lambda \nu \sqrt{v_s} \dd W_s.
\end{equation}
We note that the roughness of the fBm (or the singularity of the Volterra kernel  in \eqref{eq:RL-kernel} and \eqref{eq:rCIR}) causes considerable analytical and numerical difficulties, owing to the fact that the variance process $v$ fails to be a semimartingale or a Markov process in rough volatility models. Due to these technical difficulties, results holding for both the aforementioned classes of models are difficult to achieve. We refer to \cite{BFGMS19,FGP21,KRLP18} for attempts at unifying the treatment of rough volatility models.

Empirical studies of realized variance data as well as studies of the ATM skew in implied volatility surfaces tend to conclude that $H \ll 1/2$, often even $H < 0.1$. As both attempts involve a certain kind of smoothing -- realized variance being an estimate of $\int_t^{t+h} v_s \dd s$ rather than $v_t$ itself, option prices and their implied skews being in general not  available or reliable very close to maturity -- this begs the question, if $H$ actually might even be equal to zero. 
From the realized variance viewpoint, \cite{FTW19} indeed seems to suggest that $H$ could be $0$. Of course, $H=0$ is not allowed in the rough volatility models suggested above, but the case has been studied before in the literature on \emph{Gaussian multiplicative chaos}, see for instance the review paper \cite{RV14}. Indeed, a proper scaling limit of fBm $W^H$ as $H \to 0$ produces a log-correlated Gaussian field (see, for instance, \cite{NR18,hager2020multiplicative}).

Despite the well-established literature, some important financial questions regarding the $H \to 0$ limit are not very well understood yet. In particular, what happens with the ATM skew of implied volatility as $H \to 0$. On the one hand, given that the skew behaves like $T^{H-1/2}$ as time-to-maturity $T \to 0$ in rough volatility models with $H>0$, one might expect a power law explosion as $T^{-1/2}$ in the limiting case $H = 0$. However, a closer look at the asymptotic results for $H>0$,  casts some doubt on this conjecture. Indeed, taking the moderate deviation asymptotics of \cite{BFGHS19} as one example of such an expansion, we have the asymptotic formula
\begin{equation}
    \label{eq:moderate-skew}
    \text{skew} \sim \text{const}\, \rho \eta \f{\sqrt{2H}}{(H+1/2)(H+3/2)} T^{H-1/2}
\end{equation}
as $T \to 0$. Of course, the factor $\f{\sqrt{2H}}{(H+1/2)(H+3/2)} \to 0$ as $H \to 0$, so that \eqref{eq:moderate-skew} entails two limits ($H\to0$, $T\to0$), which cannot necessarily be interchanged. Note that $\sqrt{2H}$ appears in \eqref{eq:moderate-skew} by requiring the underlying fBm to have variance equal to one at time $t=1$. Indeed, some standardization of this type is needed in order to make models for different values of $H$ comparable -- even though the choice of standardization may be quite important.

\begin{rem}
As described above, in this paper we vary the Hurst index $H$ while keeping the other model parameters -- in particular, the vol-of-vol $\eta$ -- fixed. An alternative point of view motivated from the shape of the skew itself is to keep $\sqrt{H} \eta$ rather then $\eta$ fixed, which leads to more stable behavior of the skew. The second alternative, however, has undesirable effects on other properties of the model. Fixing $\sqrt{H}\eta$ implies exploding variance of $\log v_t$ in the model~\eqref{eq:rbergomi} as $H \to 0$. Consequently, we expect an explosion of the kurtosis of the asset price as well as of the volatility of VIX options. 
\end{rem}

 The multiplicative-chaos approach in  \cite{NR18} is used in \cite{forde_bergomi_H0} to establish a $H\to 0$ limit for rough Bergomi, for which the limit skewness vanishes or blows-up depending on the renormalization. Using continuity of Volterra integral equations, a $H\to 0$ limit for driftless rough Heston is considered in \cite{forde_heston_H0}, in this case with a non-symmetric limit behavior. However, in \cite{forde_bergomi_H0,forde_heston_H0} no explicit formula for the skew of implied volatility is given. Moreover, in both cases the limit volatility is not a process, but is defined as a distribution. Hyper-rough volatility, in a sense analogous to a $H<0$ model, has also been considered \cite{JusselinRosenbaum2020,jaber2019weak}, but also in this case spot volatility is not defined.  
 The extreme $T^{-1/2}$ speed of explosion for the skew expected in the $H\to 0$ limit has been shown to be a model-free bound \cite{lee05,fuk2010}, and is reached under local volatility through a volatility function with a singularity ATM \cite{pig19}, but this poses the problem of time-consistency (see also \cite{fps2020}). To the best of our knowledge, this extreme behavior of the skew has not been shown for any (time-consistent) stochastic volatility model, where the volatility is a proper process.

\subsection{Our contribution.} In this paper, we consider an actual process with $H=0$ by introducing a logarithmic term in the definition of the kernel $K:\RR_+\rightarrow \RR$, which for small $r>0$ behaves similarly to $r^{H-\frac{1}{2}}\log(1/r)^{-p}$ for some parameter $p>1$. This modification ensures   that $K$ remains square integrable for all $H \in [0,1/2)$, see \eqref{eq:log-kernel} for the precise definition. Note that we ignore $H\geq 1/2$ in this paper, as what we are interested in is the $H\to 0$ limit, but there would be no real difficulties in considering $H \in [0,1]$, or even $H > 1$. Hence, the resulting family of Gaussian Volterra processes $\hatW$ will be continuous and with finite variance even for $H=0$, and the ambiguities of the asymptotic analysis for $H\to0$ and $T\to0$ cease to matter, as we can simply do the asymptotic for $H=0$. We stress again that $\hatW$ is a proper, continuous Gaussian process even for $H=0$. At the same time, as we apply our logarithmic modification only close to the singularity of the power-law kernel, we may expect that the resulting rough volatility models are close to the corresponding standard rough volatility models for $H \gg 0$, see Figure~\ref{fig:skew_sberg_rberg}. The process we propose here can be seen as an extension of the log Brownian motion studied in \cite{MocioalcaViens2005}, to include a fractional power. This allows for a better comparison with classical fractional processes, such as the Riemann-Liouville fractional Brownian motion, typically used in rough volatility models. We also mention that the standard log-Brownian motion (without the fractional power) has recently been analysed in the context of rough volatility models in \cite{gulisashvili2020timeinhomogeneous}, as well as in the context of regularization by noise for ill-posed ODEs in \cite{harang2020cinfinity}.
\begin{figure}
    \centering
    \begin{subfigure}[t]{0.49\textwidth}
        \centering
        \includegraphics[width=\textwidth]{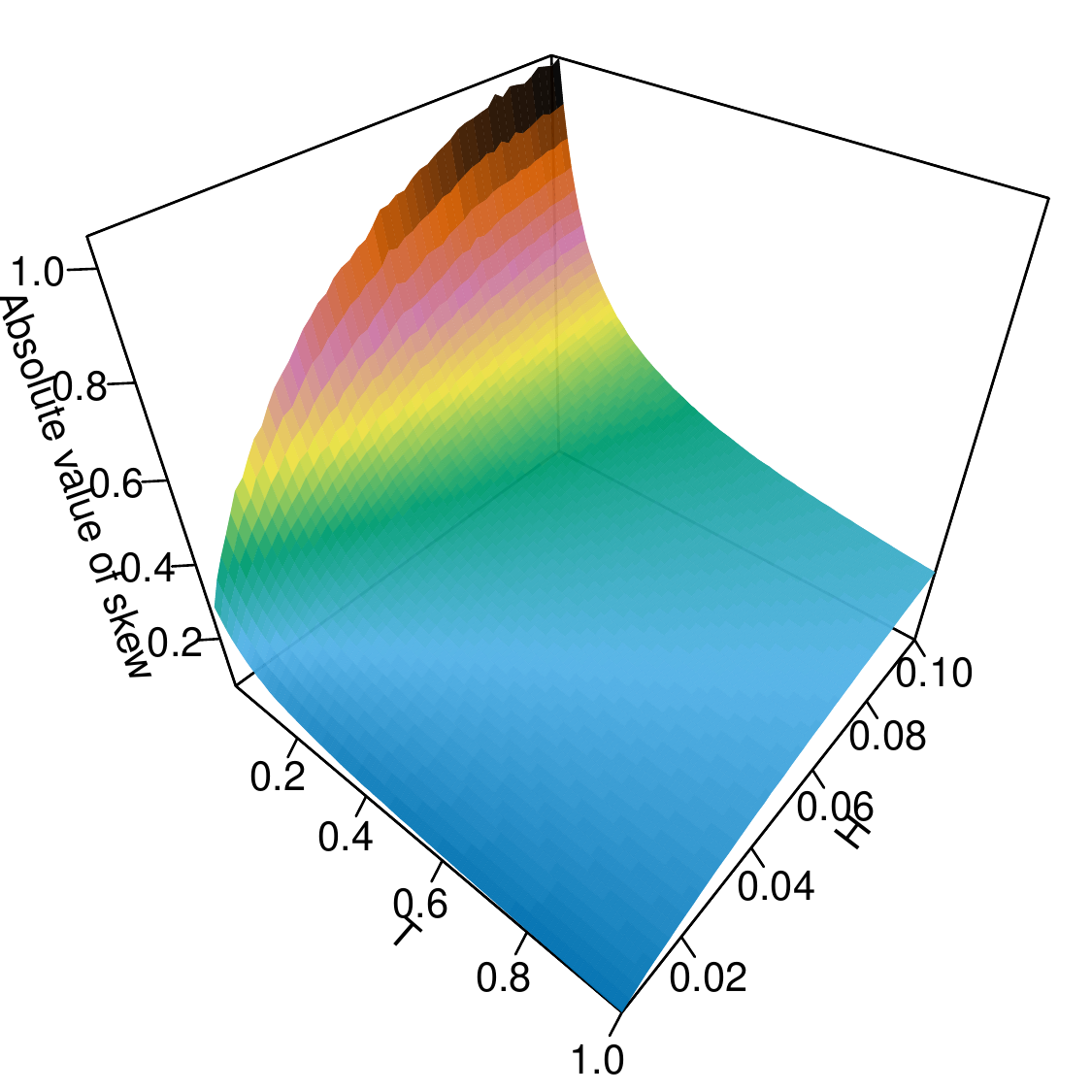}
        \caption{Rough Bergomi model}
        \label{fig:skew-surface-rBergomi}
    \end{subfigure}~
    \begin{subfigure}[t]{0.49\textwidth}
        \centering
        \includegraphics[width=\textwidth]{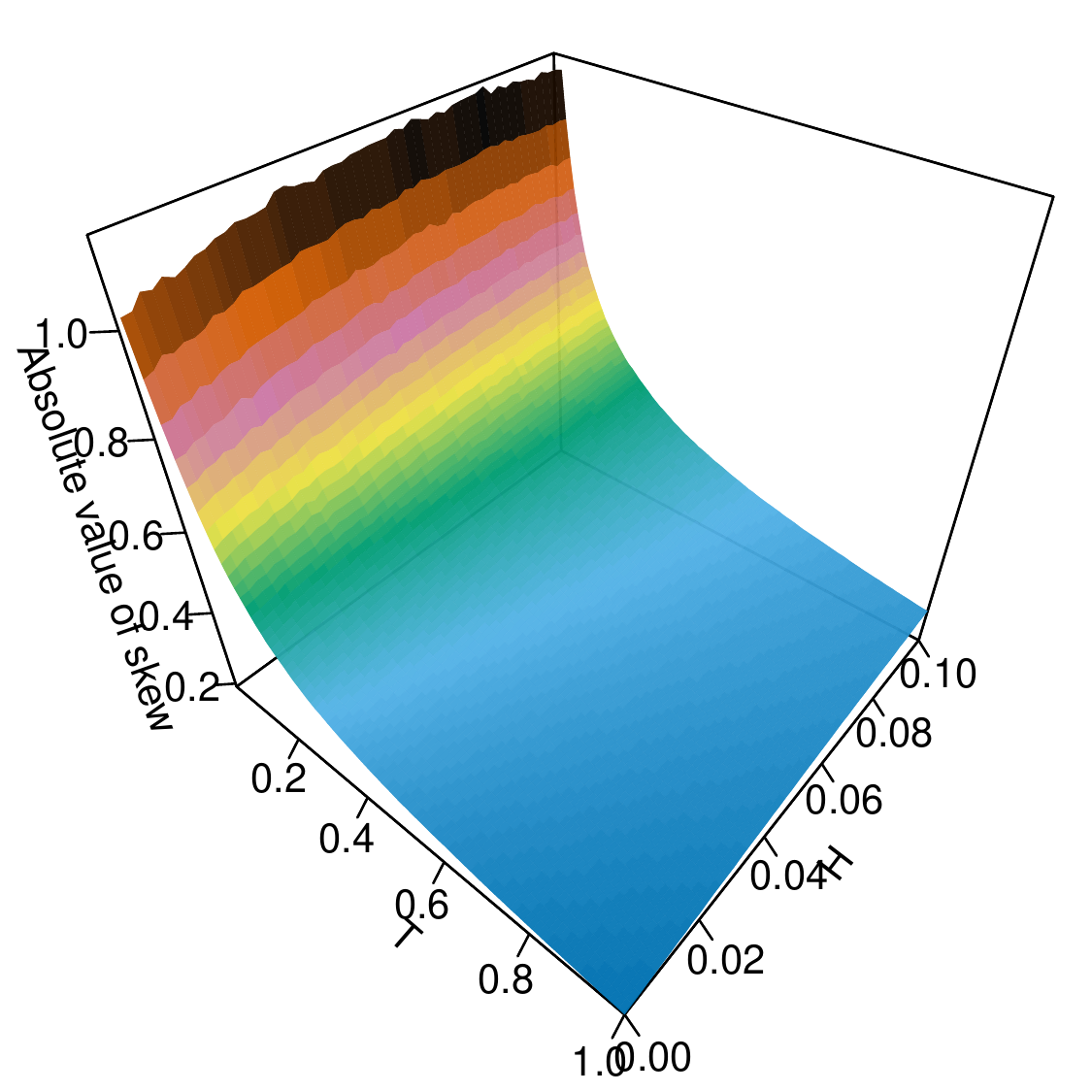}
        \caption{Super-rough Bergomi model}
        \label{fig:skew-surface-sBergomi}
    \end{subfigure}
    \caption{ATM implied volatility skews (absolute values) in the (super-) rough Bergomi model plotted against expiry $t$ and Hurst index $H$. Skews are computed by Monte Carlo simulation based on exact simulation of the underlying (log-modulated) fBm. Note that $H = 0$ is included in the plot in the super-rough case. Parameter values of the rough Bergomi model are $\eta = 2.0$, $\rho = -0.7$, $\xi(t) \equiv 0.04$. Additional parameters for the log-fBm (see Section~\ref{sec:log-fbm} for details) are $\zeta = 0.1$, $p = 2.0$. Note how this seems in keeping with the findings in \cite{forde_bergomi_H0}, of a vanishing skewness as $H\downarrow 0$ in rough Bergomi.
    }
    \label{fig:skew-surface}
\end{figure}

In this way, we are able to obtain rough volatility models which allow continuous interpolation for $H \in [0,1/2)$, in the sense that all such choices of $H$ are valid within the same model, with no apparent breaks between them. To illustrate this observation, we consider a \emph{super-rough Bergomi model}, which is simply obtained by replacing the Riemann-Liouville fBm by the log-fBm defined in \eqref{eq:log-fbm} below in the rough Bergomi model of \cite{BFG16}. Figure~\ref{fig:skew-surface-sBergomi} shows the ATM-skew for various expires and values of $H$ between -- and including -- $0$ and $0.1$. Indeed, the surface ``looks'' smooth in $H$, visually indicating a smooth transition from the power law explosion $T^{H-1/2}$ for $H>0$ to the skew behaviour at $H = 0$. In contrast, the skew-behaviour changes remarkably for the standard rough Bergomi model for small $H$, see Figure~\ref{fig:skew-surface-rBergomi}. In particular, the skew flattens significantly for very small $H$. On the other hand, the log-modulated version in Figure~\ref{fig:skew-surface-sBergomi} shows no signs of flattening. To the contrary, a more refined analysis, which is the main purpose of this paper, shows that the skew behaves like $T^{H-1/2}$ -- up to logarithmic terms --  and, hence, steepens as $H \to 0$.

Note that the log-fBm does not have a scale invariance property, thereby making any short time asymptotics very difficult. Hence, in this paper we first use the vol-of-vol expansion in \cite{fukasawa2011} to obtain an asymptotic formula for the ATM skew when the volatility-of-volatility is small. Indeed, we obtain a skew formula of the form
\begin{equation}
    \label{eq:skew-asymptotic}
    \text{skew} \approx a_{H,\zeta,p}\, \rho \,\log(1/T)^{-p} \, T^{H-1/2} \epsilon, \text{ as } T \to 0,
\end{equation}
for small vol-of-vol $\epsilon$, Hurst parameter $H\in[0,1/2)$, see Theorem~\ref{thr:skew}. Here,  $p>1$ is a parameter of the kernel defined in \eqref{eq:log-kernel}, and $a_{H,\zeta,p}$ is a constant depending on $H$ -- and other parameters -- which is smooth in $H$ with $a_{0,\zeta,p} \neq 0$. 
Then, we prove that the short-time asymptotics corresponding to \eqref{eq:skew-asymptotic} at the Edgeworth CLT regime holds even 
without considering the small vol-of-vol regime, for log-modulated models (in the sense of regular variation) with $H>0$.

\subsection{Outline of the paper}
In Section \ref{sec:log-fbm} we introduce a class of Gaussian processes, extending the notion of fractional Brownian motion through a modulation with a $\log$ term.  In Section \ref{sec:moments-log-fract} we compute some essential probabilistic features of the log fractional Brownian motion (log-fBm) such as variance and covariance, which will be crucial for applications to asymptotic expansions of the implied volatility corresponding to certain rough volatility models as well as for simulation of the processes. In Section \ref{sec:Fukasawas method} we provide a short overview of the martingale expansion developed by Fukasawa in \cite{fukasawa2011}, and its application towards analysis of the implied volatility surface. Furthermore, we provide explicit computations of the covariance terms appearing in the asymptotic expansion in the case when the volatility is driven by a log-fBm. Section~\ref{sec:skew-expansion} deals with a particular skew expansion, asymptotic in vol-of-vol,  using Fukasawa's martingale approach. Here, we also include an asymptotic expansion for the rough Bergomi model, when driven by a log-fBm. In Section \ref{sec: asymptotic paolo} we consider a slightly more general kernel and the asymptotics for the skew at the Edgeworth CLT regime, that holds for any vol-of-vol parameter, generalising a result in \cite{fukasawa2020}. At last, in Section \ref{sec:numerics}  we provide some details on numerical simulations and  computations of the skew. 

\section{Rough and super rough  volatility modelling}
\label{sec:log-fbm}

The fractional Brownian motion (fBm) is a well studied Gaussian process. A simplified version of this process, called the Riemann-Liouville fBm is given as a Volterra type stochastic integral with respect to a Brownian motion, i.e. 
\begin{equation*}
    W^H_t 
    := \sqrt{2H}\int_0^t(t-s)^{H-\frac{1}{2}}\dd W_s, 
\end{equation*}
where $\{W_t\}_{t\in 0,T]}$ denotes a standard Brownian motion. However this process is typically defined for $H\in (0,1)$, and thus excludes the case when $H=0$.  To overcome this challenge, we propose to modulate the Riemann-Liouville fBm with a log term to  control the singularity in the kernel $(t-s)^{H-\frac{1}{2}}$.   
In this section we therefore will construct a particular fractional process which allows to generalize the Riemann-Liouville fBm to $H\in [0,1)$. 
We consider the Gaussian  Volterra  process
\begin{equation}
  \label{eq:log-fbm}
  \widehat{W}_t := \int_0^t K(t-s) \dd W_s,
\end{equation}
where the kernel $K$ satisfies
\begin{equation}
  \label{eq:log-kernel}
  K(r) \coloneqq C r^{H-1/2} \max\left(\zeta \log\left( 1 / r \right), \,
    1 \right)^{-p} =
  \begin{cases}
    C r^{H-1/2} \zeta^{-p} \log(1/r)^{-p}, & 0 \le r \le 
    e^{-1/\zeta},\\
    C r^{H-1/2}, & r > e^{-1/\zeta}.
  \end{cases}
\end{equation}
We assume that $0 \le H < 1/2$, $\zeta > 0$, and $p>1$. $C$ is a
constant which will be chosen to normalize the process, i.e., to guarantee
\begin{equation*}
  \var(\widehat{W}_1) = 1.
\end{equation*}
\begin{figure}
    \centering
    \includegraphics{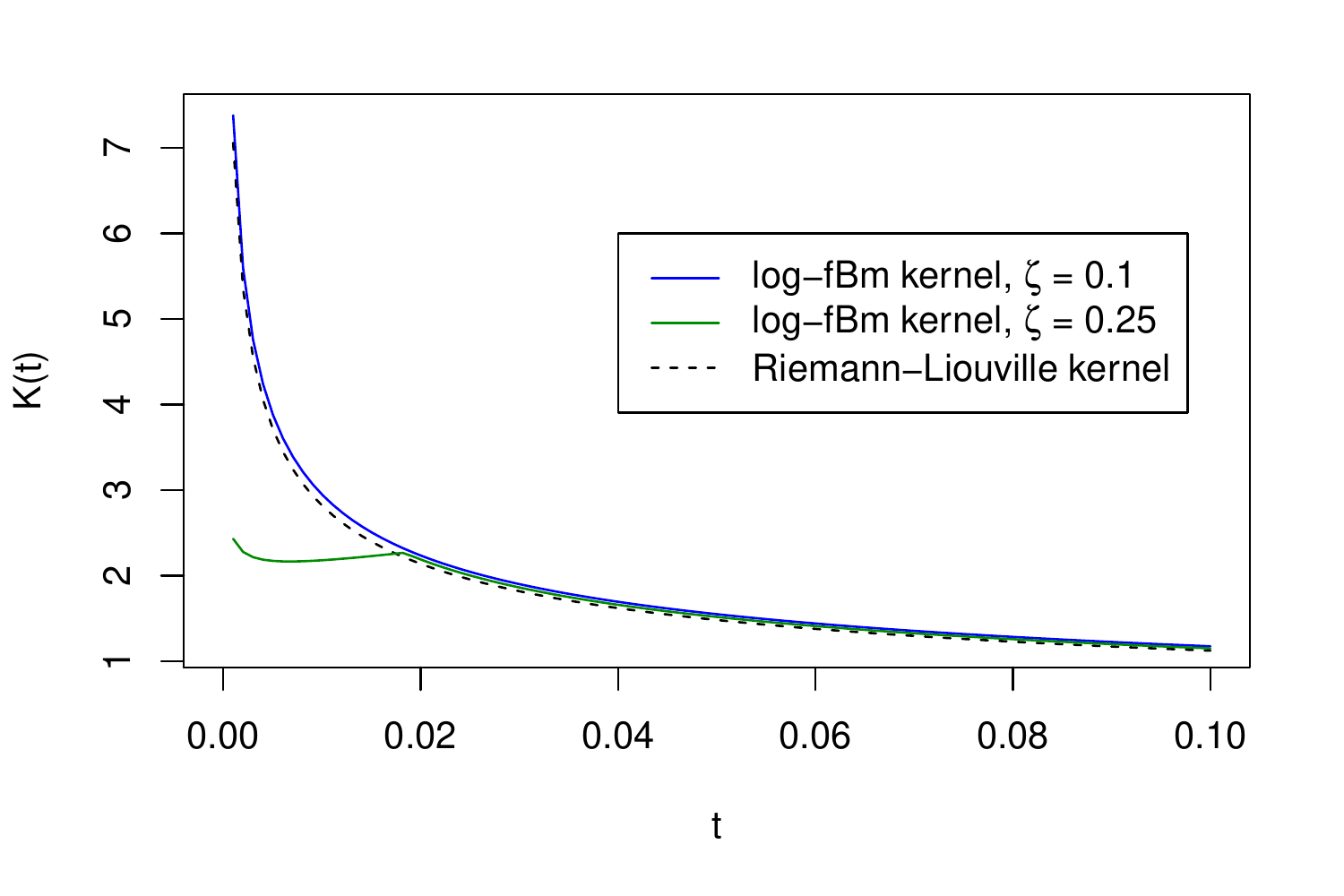}
    \caption{The logarithmic kernel \eqref{eq:log-kernel} with $H = 0.1$,  $p = 2$, and $\zeta \in \{0.1, 0.25\}$ compared against the Riemann-Liouville kernel $K(t) \simeq t^{H-1/2}$. All kernels are normalized to have $\var \, \widehat{W}_1 = 1$.}
    \label{fig:my_label}
\end{figure}
\noindent
Therefore, $C$ will depend on all the other parameters. We call the process  $\{\widehat{W}_t\}_{t\in 0,T]}$ a {\em log-fractional Brownian motion (log-fBm)}.  We note that by the
choice of these parameters, $\{\hatW_t\}_{t\in 0,T]}$ is a continuous Gaussian process with
vanishing expectation. Indeed, it is readily checked that $K\in L^2([0,T])$ (see in particular Lemma \ref{lem:var-logfbm} for explicit computations), and thus $\{\widehat{W}_t\}_{t\in 0,T]}$ is well defined as a Wiener integral. Moreover, due to the assumption that $p>1$,
the continuity can be verified by Fernique's continuity condition (see \cite{MocioalcaViens2005} below Definition 18 or  \cite{gulisashvili2020timeinhomogeneous} Remark 3.4), even in the case $H=0$. In the case $H>0$, sample paths can also be proven to be of the same regularity as the fractional Brownian motion, in the sense of H\"older continuity. Indeed, in Lemma \ref{lem:Incremental varaince} we prove that there exists a constant $C=C_{\zeta,H}>0$ such that 
\begin{equation}
    \EE[|\widehat{W}_t-\widehat{W}_s|^2]\leq C |t-s|^{2H},
\end{equation}
and thus an application of Kolmogorov's continuity theorem yields the claimed regularity. 

For ease of notation we introduce $\chi \coloneqq  e^{-1/\zeta}$. 

\begin{rem}
As $\chi$ depends exponentially on $1/\zeta$ and the log-fBm-kernel introduced in \eqref{eq:log-kernel} only differs from the standard Riemann-Liouville kernel on $(0,\chi)$, one may be tempted to expect that the corresponding process $\hatW$ behaves very similarly to the Riemann-Liouville fBm as often used in rough volatility models. This is undoubtedly true for $H \gg 0$, and motivates the whole paper, but note that there are profound differences as $H \to 0$, as witnessed by Figure~\ref{fig:scaling_factor}. In particular, despite the very localized changes, the log-fBm has finite variance even for $H  = 0$.
\end{rem}

We see the super-rough Bergomi model mostly as a \emph{perturbation} of the rough Bergomi model, which stays close to the rough Bergomi model when $H \gg 0$, but still has nice properties for $H \to 0$, see Figure~\ref{fig:skew_sberg_rberg} for a comparison of skews in the (super-) rough Bergomi model. This implies that the super-rough Bergomi model differs substantially from the rough Bergomi model as $H \to 0$, as already seen in Figure~\ref{fig:skew-surface}.

\begin{rem}
\label{rem:parameter-choice}
The super-rough Bergomi model adds two more parameters ($\zeta>0$ and $p>1$) to the rough Bergomi model. If we want to keep close to the rough Bergomi model for not-too-small $H$, then we need both $\zeta$ and $p$ to be chosen small within their admissible ranges.  This, however, may very well introduce numerical difficulties for $H \approx 0$, as the kernel approaches a kernel which fails to be square integrable as $\zeta \to 0$ or $p \to 1$. For financial practise, we suggest fixing $\zeta$ to a convenient value, e.g., $\zeta = 1/10$, and calibrating $p$ using the small-time skew asymptotic of Corollary~\ref{cor:rbergomi-short-skew}. Alternatively, one could additionally fix $p$, e.g., to $p = 2$, in which case one should view the log-modulation as a regularization technique without inherent financial meaning. In either case, log-modulation is probably only sensible if $H$ is very small.
\end{rem}

\section{Moments of the log-fractional Brownian motion}
\label{sec:moments-log-fract}

The skew formulas to be derived in later section will depend on formulas for some moments of the log-fractional Brownian motion and the underlying Brownian motion. Computing these moments will also give us an explicit formula for the constant $C$ in the kernel \eqref{eq:log-kernel}. Throughout this section, we shall often use the following elementary lemma.

\begin{lem}
  \label{lem:exp-integral}
  Consider $0 < u < 1$, $a \le 1$, and $b>1$. Then we have
  \begin{equation*}
    \int_0^u r^{-a} \log\left( \frac{1}{r} \right)^{-b} \dd r = 
    \log\left(\frac{1}{u} \right)^{1-b} \Erm_b\left( (1-a) \log\left(\frac{1}{u} \right) \right),
  \end{equation*}
  where $E_b$ denotes the exponential integral, given by
  \begin{equation*}
    \Erm_b(x) \coloneqq \int_1^\infty e^{-xt} t^{-b} \dd t, \quad x \ge 0.
  \end{equation*}
\end{lem}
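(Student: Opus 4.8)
The plan is to verify the identity by a direct substitution that turns the integral over $(0,u)$ into the tail integral defining the exponential integral. First I would set $s \coloneqq \log(1/r)$, so that $r = e^{-s}$ and $\dd r = -e^{-s}\dd s$; as $r$ ranges over $(0,u)$ the new variable $s$ ranges over $(\log(1/u),\infty)$, and the orientation flip cancels the minus sign. The integrand $r^{-a}\log(1/r)^{-b}\dd r$ becomes $e^{(a-1)s} s^{-b}\,\dd s$, so
\begin{equation*}
  \int_0^u r^{-a}\log\!\left(\tfrac1r\right)^{-b}\dd r = \int_{\log(1/u)}^\infty e^{-(1-a)s} s^{-b}\,\dd s .
\end{equation*}
Next I would rescale to bring the lower limit to $1$: put $t \coloneqq s/\log(1/u)$, i.e. $s = t\log(1/u)$, $\dd s = \log(1/u)\,\dd t$. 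Then $e^{-(1-a)s} = e^{-(1-a)\log(1/u)\, t}$, $s^{-b} = \log(1/u)^{-b} t^{-b}$, and the extra Jacobian contributes one more factor of $\log(1/u)$, giving a net prefactor $\log(1/u)^{1-b}$ and the integral $\int_1^\infty e^{-x t} t^{-b}\,\dd t$ with $x = (1-a)\log(1/u)$, which is exactly $\Erm_b(x)$. This yields the claimed formula.

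The few things I would check for rigour rather than just formal manipulation: the hypotheses $0<u<1$ guarantee $\log(1/u)>0$ so the rescaling is legitimate and $x = (1-a)\log(1/u)\ge 0$ since $a\le 1$ (matching the stated domain $x\ge 0$ of $\Erm_b$); the hypothesis $b>1$ ensures convergence of $\int_1^\infty t^{-b}\dd t$ near $t=\infty$ when $a=1$ (the only borderline case), while for $a<1$ the exponential decay handles convergence regardless; and near $r=0$ the factor $\log(1/r)^{-b}\to 0$, so together with $a\le 1$ the original integral converges at the left endpoint, so all integrals in sight are finite and Fubini/monotone-type issues do not arise. The substitutions are smooth bijections on the relevant ranges, so no principal-value or improper-integral subtlety is involved beyond ordinary convergence at the endpoints.

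I do not expect a genuine obstacle here; this is a one-line change of variables. The only point requiring a moment's care is the bookkeeping of the three factors of $\log(1/u)$ — one from $s^{-b}$ producing $\log(1/u)^{-b}$ and one from the Jacobian $\dd s$, combining to $\log(1/u)^{1-b}$ — and making sure the argument of $\Erm_b$ is read off correctly as $(1-a)\log(1/u)$. Once those are tracked, the identity follows immediately.
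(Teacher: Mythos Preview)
Your proof is correct. The paper does not actually prove this lemma---it is stated as elementary without proof---and your two-step substitution $s=\log(1/r)$ followed by $t=s/\log(1/u)$ is exactly the natural way to verify it, with the convergence checks you note being the only points requiring care.
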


Note that the exponential integral $\Erm_b(x)$ is infinite for negative $x$, which is excluded by our
assumptions, and that $\Erm_b(x)$ can be expressed using the incomplete gamma function. Using the above lemma, second moments of $\{\hatW_t\}_{t\in 0,T]}$ can be computed explicitly.
\begin{lem}
  \label{lem:var-logfbm}
 Let $\{\hatW_t\}_{t\in 0,T]}$ be the log-fBm given in \eqref{eq:log-fbm} with kernel given in \eqref{eq:log-kernel}. Then  the variance of $\hatW$ satisfies
   \begin{equation*}
    \var (\hatW_t) =
    \begin{cases}
      C^2 \left[ \zeta^{-2p} \log\left(\f{1}{t \wedge \chi}
        \right)^{1-2p} \Erm_{2p}\left( 2H \log\left( \f{1}{t \wedge \chi}
          \right) \right) + \f{t^{2H} - (t \wedge \chi)^{2H}}{2H} \right],
      & H > 0,\\
      C^2\left[ \zeta^{-2p}  \f{1}{2p-1} \log\left(\f{1}{t \wedge \chi}
        \right)^{1-2p} + \log\left( \f{t}{t \wedge \chi} \right) \right],
      & H = 0.
    \end{cases}
  \end{equation*}
  Assuming $\chi < 1$, the scaling constant $C = C_{H,\zeta,p}$ required to ensure $\var\, \hatW_1 = 1$ satisfies
  \begin{equation*}
      C_{H,\zeta,p} \coloneqq \begin{cases}
      \left[ \f{1}{\zeta} \Erm_{2p}\left( 2H/\zeta \right) + \f{1 - \chi^{2H}}{2H}\right]^{-1/2}, & H > 0,\\
      \left[  \f{2p}{(2p-1)\zeta} \right]^{-1/2}, & H = 0.
      \end{cases}
  \end{equation*}
\end{lem}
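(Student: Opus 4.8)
The plan is to compute $\var(\hatW_t) = \int_0^t K(r)^2\,\dd r$ directly by splitting the integral at the break point $\chi = e^{-1/\zeta}$ of the kernel, and then to apply Lemma~\ref{lem:exp-integral} to the singular piece. Since $\hatW_t = \int_0^t K(t-s)\,\dd W_s$ is a Wiener integral, the It\^o isometry gives $\var(\hatW_t) = \int_0^t K(t-s)^2\,\dd s = \int_0^t K(r)^2\,\dd r$ after the substitution $r = t-s$. From the definition \eqref{eq:log-kernel},
\[
  K(r)^2 = \begin{cases}
    C^2 \zeta^{-2p} r^{2H-1} \log(1/r)^{-2p}, & 0 \le r \le \chi,\\
    C^2 r^{2H-1}, & r > \chi.
  \end{cases}
\]
First I would treat the case $t \le \chi$: here only the singular branch is used, so $\var(\hatW_t) = C^2 \zeta^{-2p}\int_0^t r^{2H-1}\log(1/r)^{-2p}\,\dd r$. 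Apply Lemma~\ref{lem:exp-integral} with $u = t$, $a = 1-2H \le 1$, and $b = 2p > 1$ (valid since $p>1$); this yields exactly $C^2\zeta^{-2p}\log(1/t)^{1-2p}\Erm_{2p}(2H\log(1/t))$ for $H>0$. For $H=0$ one has $a=1$, and Lemma~\ref{lem:exp-integral} gives $\log(1/t)^{1-2p}\Erm_{2p}(0) = \log(1/t)^{1-2p}\cdot\tfrac{1}{2p-1}$, since $\Erm_{2p}(0) = \int_1^\infty t^{-2p}\,\dd t = \tfrac{1}{2p-1}$. In both sub-cases $t\wedge\chi = t$, matching the stated formula (the second summand vanishes because the range of the outer-branch integral is empty).

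Next I would handle $t > \chi$, splitting $\var(\hatW_t) = C^2\zeta^{-2p}\int_0^\chi r^{2H-1}\log(1/r)^{-2p}\,\dd r + C^2\int_\chi^t r^{2H-1}\,\dd r$. The first term is again Lemma~\ref{lem:exp-integral} with $u = \chi$, giving $C^2\zeta^{-2p}\log(1/\chi)^{1-2p}\Erm_{2p}(2H\log(1/\chi))$; since $\log(1/\chi) = 1/\zeta$ this simplifies the arguments, but written with $t\wedge\chi = \chi$ it is precisely the first term of the claimed expression. The second term is elementary: $\int_\chi^t r^{2H-1}\,\dd r = \tfrac{t^{2H}-\chi^{2H}}{2H}$ for $H>0$ and $\int_\chi^t r^{-1}\,\dd r = \log(t/\chi)$ for $H=0$ — again matching, with $\chi = t\wedge\chi$. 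Combining the two cases for $t$ gives the unified piecewise formula in the statement.

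Finally, the normalization constant follows by setting $t=1$ (so $t\wedge\chi = \chi$, using $\chi<1$) and solving $\var(\hatW_1) = 1$ for $C$. In the $H>0$ case, substituting $\log(1/\chi) = 1/\zeta$ into the first term gives $C^2\zeta^{-2p}(1/\zeta)^{1-2p}\Erm_{2p}(2H/\zeta) = C^2\zeta^{-1}\Erm_{2p}(2H/\zeta)$, and the second term is $C^2(1-\chi^{2H})/(2H)$, so $C^{-2} = \tfrac1\zeta\Erm_{2p}(2H/\zeta) + \tfrac{1-\chi^{2H}}{2H}$, as claimed. In the $H=0$ case the bracket is $\tfrac{1}{(2p-1)\zeta}(1/\zeta)^{1-2p}\cdot\zeta^{2p-1}\cdot\zeta^{1-2p}$—more directly, $\zeta^{-2p}\tfrac{1}{2p-1}(1/\zeta)^{1-2p} + \log 1 = \tfrac{\zeta^{-1}}{2p-1}$, and wait, this needs $C_{0,\zeta,p} = (\tfrac{2p}{(2p-1)\zeta})^{-1/2}$; I would double-check the constant bookkeeping here, as the only genuine risk of error in this otherwise routine proof is a slipped factor in the $H=0$ normalization. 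Substantively there is no obstacle: everything reduces to Lemma~\ref{lem:exp-integral} plus elementary integration, and the $H=0$ formulas are recovered either by direct computation or as the $a\uparrow 1$ (resp. $H\downarrow 0$) limit of the $H>0$ expressions, using $\Erm_{2p}(0) = 1/(2p-1)$ and $\tfrac{t^{2H}-(t\wedge\chi)^{2H}}{2H} \to \log(t/(t\wedge\chi))$.
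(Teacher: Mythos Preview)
Your approach is exactly the paper's: write $\var(\hatW_t)=\int_0^t K(r)^2\,\dd r$, split at $t\wedge\chi$, apply Lemma~\ref{lem:exp-integral} to the log-modulated piece, and integrate the power-law piece directly. The paper does the two cases $t\le\chi$ and $t>\chi$ in one line via the $t\wedge\chi$ notation, but the content is identical.

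The only point that needs fixing is the $H=0$ normalization you flagged. Your slip is in the second summand: with $t=1$ and $\chi<1$ one has $t\wedge\chi=\chi$, so $\log\bigl(t/(t\wedge\chi)\bigr)=\log(1/\chi)=1/\zeta$, not $\log 1$. Thus
\[
C^{-2}=\zeta^{-2p}\,\frac{1}{2p-1}\,(1/\zeta)^{1-2p}+\frac{1}{\zeta}
=\frac{1}{(2p-1)\zeta}+\frac{1}{\zeta}
=\frac{1}{\zeta}\cdot\frac{2p}{2p-1},
\]
which gives $C_{0,\zeta,p}=\bigl(\tfrac{2p}{(2p-1)\zeta}\bigr)^{-1/2}$ as stated. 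With this correction the proof is complete and matches the paper.
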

\begin{proof}
  By definition, we have 
  \begin{align*}
  \var (\hatW_t) &= \int_0^t K(t-r)^2 \dd r\\
               &= \int_0^t K(r)^2 \dd r\\
               &= C^2 \left[ \zeta^{-2p} \int_0^{t \wedge \chi} r^{2H-1} \log\left( \f{1}{r}
                 \right)^{-2p} \dd r +
                 \int_{t \wedge \chi}^t r^{2H-1} \dd r\right].
  \end{align*}
 Applying  Lemma~\ref{lem:exp-integral}, the integral gives
  \begin{equation*}
    \int_0^{t \wedge \chi} r^{2H-1} \log\left( \f{1}{r}
    \right)^{-2p} \dd r =  \log\left(\f{1}{t \wedge \chi}
    \right)^{1-2p} \Erm_{2p}\left( 2H \log\left( \f{1}{t \wedge \chi}
      \right) \right),
  \end{equation*}
  which simplifies in the case $H = 0$ to the expression
  \begin{equation*}
    \int_0^{t \wedge \chi} r^{-1} \log\left( \f{1}{r}
    \right)^{-2p} \dd r = \f{1}{2p-1} \log\left(\f{1}{t \wedge \chi}
    \right)^{1-2p}.
  \end{equation*}
  For the second integral, we have by standard computations
  \begin{equation*}
    \int_{t \wedge \chi}^t r^{2H-1} \dd r =
    \begin{cases}
      \f{t^{2H} - (t \wedge \chi)^{2H}}{2H}, & H > 0,\\
      \log(t) - \log\left( t \wedge \chi \right), & H = 0.
    \end{cases}\qedhere
  \end{equation*}
\end{proof}

\begin{figure}
    \centering
    \includegraphics{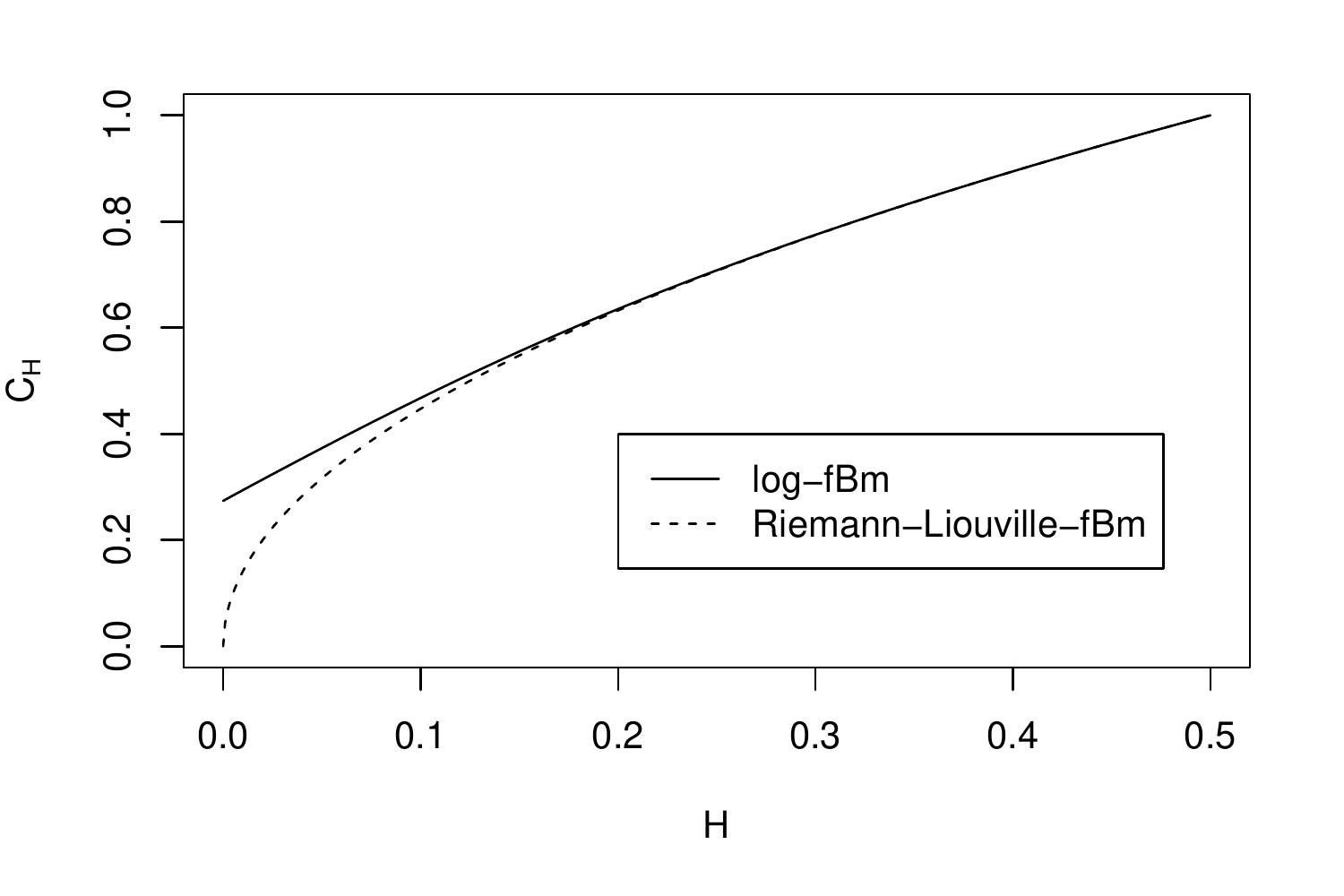}
    \caption{The scaling factor $C_{H,\zeta,p}$ needed to achieve $\var\,\hatW_1 = 1$ as shown in Lemma~\ref{lem:var-logfbm} compared to the corresponding scaling factor $\sqrt{2H}$ for the Riemann-Liouville kernel $K(r) \simeq r^{H-1/2}$. Parameters for the log-fBm are  $\zeta = 0.1$, $p = 2$, giving $\chi \approx 5 \times 10^{-5}$.}
    \label{fig:scaling_factor}
\end{figure}
\noindent
Note that the scaling factors are continuous in $H$ on $[0,1/2]$, see also Figure~\ref{fig:scaling_factor}.

Unfortunately, we have not been able to find closed form expressions for the covariances $\cov\left( \hatW_t, \, \hatW_s \right)$ of the log-fractional Bm. Nonetheless, numerical integration is relatively easy using the \emph{double exponential method} (see, for instance, \cite{MS01}) to take care of the singularity at the boundary of the integral.

On the other hand, we can give suitable bounds for the incremental variance in the case $H>0$, which ensures H\"older continuity of the sample paths by application of Kolmogorov's continuity theorem. This follows from the following lemma:
\begin{lem}\label{lem:Incremental varaince}
  Assume that $\zeta \le \frac{p}{1/2-H}$ and $H > 0$.
  For any $s,t\in [0,1]$ we have 
  \begin{equation}\label{eq:inc var}
      \EE[|\widehat{W}_t-\widehat{W}_s|^2]\leq C|t-s|^{2H}.
  \end{equation}
\end{lem}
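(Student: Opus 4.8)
The plan is to turn the incremental variance into deterministic kernel integrals via the Wiener-integral structure of $\widehat{W}$, and then to run the classical fractional-Brownian-motion estimate while keeping track of the logarithmic modulation. Assume without loss of generality $0\le s<t\le 1$ and write $h:=t-s$. Since $\widehat{W}_t=\int_0^t K(t-u)\,\dd W_u$ and the intervals $[0,s]$ and $[s,t]$ are disjoint, the It\^o isometry gives
\[
  \EE\bigl[|\widehat{W}_t-\widehat{W}_s|^2\bigr]=\int_0^s\bigl(K(t-u)-K(s-u)\bigr)^2\,\dd u+\int_s^t K(t-u)^2\,\dd u,
\]
and I will call the two summands $A$ and $B$. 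First I would observe that $\EE[|\widehat{W}_t-\widehat{W}_s|^2]$ is continuous (by $L^2$-continuity of $\widehat{W}$), hence bounded, on the compact set $\{0\le s\le t\le 1\}$; therefore \eqref{eq:inc var} is trivial once $h$ is bounded below by a fixed constant, and one may assume throughout that $h\le\chi/2$, which keeps both $h$ and $2h$ inside the first branch of Lemma~\ref{lem:var-logfbm}.

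The term $B$ is immediate: the substitution $r=t-u$ gives $B=\var(\widehat{W}_{h})$, and combining Lemma~\ref{lem:var-logfbm} with the elementary bound $\Erm_{2p}(x)\le e^{-x}/(2p-1)$ and with $\log(1/h)\ge 1/\zeta$ (valid since $h\le\chi$) yields $B\le\frac{C^2}{\zeta(2p-1)}\,h^{2H}$. For $A$, I would substitute $v=s-u$ to write $A=\int_0^s\bigl(K(v+h)-K(v)\bigr)^2\,\dd v$ and split the domain at $v=h$. On $\{v\le h\}$ I use $(K(v+h)-K(v))^2\le 2K(v+h)^2+2K(v)^2$; after the substitutions $w=v+h$ and $w=v$, both resulting integrals are of the form $\int_0^{\tau}K(w)^2\,\dd w=\var(\widehat{W}_{\tau})$ for some $\tau\le 2h$, so by monotonicity of the variance and the bound just obtained (applied with $2h$ in place of $h$) this part is $\le C\,h^{2H}$.

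The only kernel-specific piece is $\{v>h\}$, which is nonempty only when $s>h$. There I would write $K(v+h)-K(v)=\int_v^{v+h}K'(r)\,\dd r$ — legitimate because $K$ is locally absolutely continuous on $(0,\infty)$, the only non-smooth point being the kink at $r=\chi$ — and then apply Cauchy--Schwarz followed by Fubini to obtain $\int_h^s(K(v+h)-K(v))^2\,\dd v\le h^2\int_h^t K'(r)^2\,\dd r$. Differentiating \eqref{eq:log-kernel} gives, for $0<r<\chi$,
\[
  K'(r)=C\,r^{H-3/2}\Bigl[\bigl(H-\tfrac12\bigr)+\tfrac{p}{\log(1/r)}\Bigr]\bigl(\zeta\log(1/r)\bigr)^{-p},
\]
while $K'(r)=C(H-\tfrac12)r^{H-3/2}$ for $r>\chi$. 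Since $(\zeta\log(1/r))^{-p}\le1$ and $0<\tfrac{p}{\log(1/r)}\le p\zeta$ on $(0,\chi)$, the hypothesis $\zeta\le p/(\tfrac12-H)$ lets me bound the bracket in absolute value by a constant $M=M_{H,\zeta,p}$, so that $K'(r)^2\le C_0\,r^{2H-3}$ on $(0,1)$ with $C_0=C^2M^2$; because $2H-3<-1$, this gives $\int_h^t K'(r)^2\,\dd r\le C_0\int_h^1 r^{2H-3}\,\dd r\le\frac{C_0}{2-2H}\,h^{2H-2}$, hence this contribution to $A$ is $\le\frac{C_0}{2-2H}\,h^{2H}$.

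Summing the three bounds yields \eqref{eq:inc var}, and then Kolmogorov's continuity criterion gives the asserted H\"older regularity. I expect the only step requiring genuine care to be the derivative estimate for the log-modulated kernel — controlling the bracket $(H-\tfrac12)+p/\log(1/r)$ uniformly up to $r=\chi$ (where the kink sits), which is precisely where the assumption $\zeta\le p/(\tfrac12-H)$ enters; everything else is the textbook fBm computation together with the bookkeeping of the edge cases $h\approx\chi$, which the compactness reduction in the first paragraph absorbs.
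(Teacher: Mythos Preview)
Your argument is correct and follows a genuinely different route from the paper's. Both proofs split the incremental variance into the ``diagonal'' piece $B=I^1=\var(\widehat{W}_h)$ and the ``off-diagonal'' piece $A=I^2$. For $A$, however, the paper partitions the integration domain according to whether $t-r$ and $s-r$ lie below or above the threshold $\chi$ (three subintervals), uses the hypothesis to assert that $K$ is monotone decreasing, and then in each region dominates the log-modulated difference by the pure power-law difference $(t-r)^{H-1/2}-(s-r)^{H-1/2}$, applying the standard fBm change of variables $y=(t-r)/(t-s)$ together with the finiteness of $\int_1^\infty\bigl(y^{H-1/2}-(y-1)^{H-1/2}\bigr)^2\,\dd y$. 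Your approach instead splits $A$ at $v=h$, handles the near-diagonal piece by the crude bound $2K(v)^2+2K(v+h)^2$ plus the explicit variance formula, and treats the far piece via the derivative representation, Cauchy--Schwarz and Fubini. This avoids the three-region casework entirely and never invokes monotonicity of the kernel.

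One small remark: the step where you invoke the hypothesis is in fact unnecessary. On $(0,\chi)$ one always has $0<p/\log(1/r)\le p\zeta$ and $(\zeta\log(1/r))^{-p}\le 1$, so $|K'(r)|\le C\bigl[(1/2-H)+p\zeta\bigr]\,r^{H-3/2}$ holds for \emph{any} $\zeta>0$; hence your proof actually establishes the lemma without the assumption $\zeta\le p/(1/2-H)$. The paper's proof, by contrast, does rely on kernel monotonicity (which is what that hypothesis is meant to supply), so your derivative-based argument is the more robust of the two.
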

\begin{proof}
  We can assume w.l.o.g.~that $ s<t$. Then
  \begin{equation}\label{eq:incremental-varince-kernel}
    \EE[|\widehat{W}_t-\widehat{W}_s|^2]=\int_s^t K(t,r)^2\dd r+\int_0^s( K(t,r)-K(s,r))^2\dd r \eqqcolon I^1(s,t) +I^2(s,t). 
  \end{equation}
  In the following, all constants $C, C_1, \ldots$ are assumed positive and may depend on $H, \zeta, p$, but not on $s$ or $t$. They may also change from line to line.
  We begin by considering $I^1(s,t)$. By definition of the kernel, we have
  \begin{equation*}
    I^1(s,t) = 
    \begin{cases}
      C_1 \int_s^t \log\left( \frac{1}{t-r} \right)^{-2p} (t-r)^{2H-1} \dd r, & t-s \le \chi,\\
      C_1 \int_{t-\chi}^t \log\left( \frac{1}{t-r} \right)^{-2p} (t-r)^{2H-1} \dd r + C_2 \int_s^{t-\chi} (t-r)^{2H-1} \dd r, & \text{else}.
    \end{cases}
  \end{equation*}
  In the case $t-s \le \chi$, we set $r=s+\theta(t-s)$, implying 
  \begin{align*}
    \left|\int_s^t (t-r)^{2H-1}\log\left(\frac{1}{t-r}\right)^{-2p}\dd r\right| &= |t-s|^{2H}\int_0^1 (1-\theta)^{2H-1}\log\bigg(\frac{1}{(t-s)(1-\theta)}\bigg)^{-2p}\dd \theta 
    \\ &\leq  |t-s|^{2H}\int_0^1 (1-\theta)^{2H-1}\log\bigg(\frac{1}{\chi(1-\theta)}\bigg)^{-2p}\dd \theta
  \end{align*}
  where we have used that $x\mapsto \log\big(\frac{1}{x(1-\theta)}\big)^{-2p}$ is increasing for each $\theta\in (0,1)$. The last integral is easily checked to be bounded. A slight modification of the argument gives the bound
  \begin{equation*}
    \int_{t-\chi}^t \log\left( \frac{1}{t-r} \right)^{-2p} (t-r)^{2H-1} \dd r \le C |t-s|^{2H}
  \end{equation*}
  in the second case, as well. Using the trivial estimate
  \begin{equation*}
    \int_s^{t-\chi} (t-r)^{2H-1} \dd r \le C |t-s|^{2H},
  \end{equation*}
  we obtain
  \begin{equation}\label{eq:inc-var-I1}
    |I^1(s,t)|\leq C |t-s|^{2H}.
  \end{equation}

  Regarding $I^2$, we need to bound three different terms. Indeed, the integration domain $[0,s]$ may -- depending on the parameters -- naturally split up into up to three subintervals:
  \begin{enumerate}
  \item For $t-\chi \le r \le s$, both kernels are log-modulated.
  \item For $s-\chi \le r < t-\chi$, $K(t,r)$ is a pure power-law kernel, but $K(s,r)$ is still log-modulated.
  \item Finally, for $0 \le r < s-\chi$, both kernels have the power-law form.
  \end{enumerate}
  More precisely, defining 
  \begin{align*}
    a &\coloneqq \min(\max(t-\chi, 0), s),\\
    b &\coloneqq \min(\max(s-\chi, 0), a),\\
  \end{align*}
  we have
  \begin{align*}
    I^2(s,t) &= \int_0^b \left[ K(t,r) - K(s,r) \right]^2 \dd r + \int_b^a \left[ K(t,r) - K(s,r) \right]^2 \dd r + \int_a^s \left[ K(t,r) - K(s,r) \right]^2 \dd r\\
             &= C \int_0^b \left[ (t-r)^{H-1/2} - (s-r)^{H-1/2} \right]^2 +\\
             &\quad + C \int_b^a \left[  (t-r)^{H-1/2} - \zeta^{-p} \log\left( \f{1}{s-r} \right)^{-p} (s-r)^{H-1/2} \right]^2 \dd r+\\
             &\quad+ C \zeta^{-2p} \int_a^s \left[\log\left( \f{1}{t-r} \right)^{-p} (t-r)^{H-1/2} - \log\left( \f{1}{s-r} \right)^{-p} (s-r)^{H-1/2} \right]^2 \dd r\\
    &\eqqcolon I^2_1(s,t) + I^2_2(s,t) + I^2_3(s,t).
  \end{align*}
  We shall now prove that each of the terms $I^2_1$, $I^2_2$, $I^2_3$ can be bounded by $C (t-s)^{2H}$.
  
  By the change of variables $y=\f{t-r}{t-s}$, we have that 
  \begin{equation*}
    I^2_1(s,t) = C (t-s)^{2H}\int_{(t-b)/(t-s)}^{t/(t-s)} \bigg( y^{H-\frac{1}{2}}-(y-1)^{H-\frac{1}{2}}\bigg)^2 \dd y.
  \end{equation*}
  Recall the following inequalities:
    $0\leq y^{H-\frac{1}{2}}-(y+1)^{H-\frac{1}{2}}\leq y^{H-\frac{3}{2}}$, for $y>1$, for $H\in(0,1/2]$
  and
    $0\leq 
    (y+1)^{H-\frac{1}{2}}
    -
    y^{H-\frac{1}{2}}\leq (y+1)^{H-\frac{3}{2}}$, for $y>1$, for $H\in[1/2,1)$. By the comparison test, these two inequalities imply in particular that for
  $H\in(0,1)$  
  \begin{equation*}
    \int_1^{\infty} ( y^{H-\frac{1}{2}}-(y-1)^{H-\frac{1}{2}})^2\dd y<\infty.
  \end{equation*}
Using also $\f{t-b}{t-s} \ge 1$, we obtain
  \begin{equation}
    \label{eq:I^2_1}
    I^2_1(s,t) \le C (t-s)^{2H}.
  \end{equation}

  We proceed to bounding $I^2_2(s,t)$. Note that by assumption, the kernel $s \mapsto K(s,r)$ is decreasing on $(r, \infty)$. Hence, $K(t,r) \le K(s,r)$. Moreover, the log-modulation factor $\zeta^{-p} \log\left( \f{1}{s-r} \right)^{-p} \le 1$, such that
  \begin{equation*}
    \left[K(t,r) - K(s,r) \right]^2 \le C \left[ (t-r)^{H-1/2} - (s-r)^{s-1/2} \right]^2.
  \end{equation*}
  Using the same calculation as for the estimate of $I^2_1$, we obtain
  \begin{equation}
    \label{eq:I^2_2}
    I^2_2(s,t) \le C(t-s)^{2H}.
  \end{equation}

  Finally, using
  \begin{equation*}
    \log\left( \f{1}{s-r} \right)^{-p} \le \log\left( \f{1}{t-r} \right)^{-p} \le \log\left( \f{1}{t-a} \right)^{-p} \le \zeta^{p}
  \end{equation*}
  for $a \le r \le s < t$, as well as $K(t,r) \le K(s,r)$, we can bound
  \begin{multline*}
    \left[\log\left( \f{1}{t-r} \right)^{-p} (t-r)^{H-1/2} - \log\left( \f{1}{s-r} \right)^{-p} (s-r)^{H-1/2} \right]^2 \\
    \le \zeta^{2p} \left[ (t-r)^{H-1/2} -  (s-r)^{H-1/2} \right]^2.
  \end{multline*}
  By the same calculation used for~\eqref{eq:I^2_1}, we obtain
  \begin{equation}
    \label{eq:I^2_3}
    I^2_3(s,t) \le C (t-s)^{2H}. \qedhere
  \end{equation}

\end{proof}

\begin{cor}
\label{cor:Holder}
Sample paths of the log-modulated fBm with Hurst parameter $H>0$ and $\zeta \le \frac{p}{1/2-H}$ are a.s.~$\alpha$-H\"{o}lder continuous for any $0<\alpha < H$.
\end{cor}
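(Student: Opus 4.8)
The plan is to deduce the corollary directly from the incremental-variance estimate of Lemma~\ref{lem:Incremental varaince} combined with the Kolmogorov--Chentsov continuity criterion, using Gaussianity to upgrade the $L^2$ bound to $L^q$ bounds for all even $q$. Since $\hatW_t - \hatW_s$ is a centred Gaussian random variable, all of its moments are controlled by its variance: for every $n \in \NN$ and all $s,t \in [0,1]$,
\[
  \EE\!\left[|\hatW_t - \hatW_s|^{2n}\right] = (2n-1)!!\,\left(\EE\!\left[|\hatW_t-\hatW_s|^2\right]\right)^n \le (2n-1)!!\, C^n\, |t-s|^{2nH},
\]
where the last inequality is Lemma~\ref{lem:Incremental varaince}, whose hypotheses ($H>0$ and $\zeta \le p/(1/2-H)$) are precisely those of the corollary.

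Next I would invoke Kolmogorov's continuity theorem in the form: if $\EE[|X_t - X_s|^q] \le K|t-s|^{1+\gamma}$ for some $q,\gamma>0$ and all $s,t$ in a compact interval, then $X$ has a modification whose paths are $\alpha$-H\"older continuous for every $\alpha < \gamma/q$. Applying this with $q = 2n$ and $\gamma = 2nH - 1$, which is positive as soon as $n > 1/(2H)$, yields a modification that is $\alpha$-H\"older for every $\alpha < H - \tfrac{1}{2n}$. Since $\hatW$ is already known to be continuous (see the discussion following~\eqref{eq:log-kernel}), any such modification is indistinguishable from $\hatW$, so the sample paths of $\hatW$ themselves are a.s.\ $\alpha$-H\"older for $\alpha < H - \tfrac{1}{2n}$.

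Finally, to cover the full range $\alpha < H$, I would take a sequence $\alpha_k \uparrow H$ and corresponding $n_k \to \infty$ with $H - \tfrac{1}{2n_k} > \alpha_k$; the intersection of the countably many almost-sure events on which $\hatW$ is $\alpha_k$-H\"older is again almost sure, and on that event $\hatW$ is $\alpha$-H\"older for every $\alpha < H$. There is essentially no obstacle here, as the argument is routine; the only point meriting a word of care is that the exceptional null set must be chosen uniformly in $\alpha < H$, which the countable-sequence argument handles.
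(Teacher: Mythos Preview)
Your proposal is correct and matches the paper's approach: the paper states the corollary without a separate proof, having already remarked (in Section~\ref{sec:log-fbm}) that the incremental-variance bound of Lemma~\ref{lem:Incremental varaince} together with Kolmogorov's continuity theorem yields the H\"older regularity. Your write-up simply makes explicit the Gaussian hypercontractivity step and the countable-union argument to cover all $\alpha<H$, which are the standard details behind that one-line reference.
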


We continue to provide an explicit formula for the covariance between the log fBm  and a (correlated) Brownian motion.

\begin{lem}
  \label{lem:cov-log-fbm-bm}
  Let $\{B_t\}_{t\in 0,T]}$ be a standard Brownian motion correlated with the Brownian motion $\{W_t\}_{t\in 0,T]}$ driving the log-fBm $\{\widehat{W}_t\}_{t\in 0,T]}$ in \eqref{eq:log-fbm}, and let $\rho\in [-1,1]$ denote the correlation parameter. Denote $u
  \coloneqq t - t \wedge s$ and $v \coloneqq (u \vee \chi) \wedge t$. Then for $s,t\in [0,T]$, the covariance between $\widehat{W}_t$ and $B_s$ is given by 
  \begin{multline*}
    \cov\left( \hatW_t, \, B_s \right) = C \rho \biggl\{ \zeta^{-p}  \biggl[
      \log\left( \f{1}{v} \right)^{1-p} \Erm_p\left( (H+1/2) \log\left(
          \f{1}{v} \right) \right) \\
           - \mathds{1}_{u>0} \log\left( \f{1}{u}
      \right)^{1-p} \Erm_p\left( (H+1/2) \log\left( \f{1}{u} \right)
      \right)  \biggr] + \f{t^{H+1/2} - v^{H+1/2}}{H+1/2} \biggr\}.
  \end{multline*}
\end{lem}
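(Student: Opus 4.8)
The plan is to reduce the covariance to a deterministic integral of the kernel and then evaluate it by splitting at the threshold $\chi = e^{-1/\zeta}$ where $K$ changes form, invoking Lemma~\ref{lem:exp-integral} on the log-modulated piece.

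Writing $B = \rho W + \sqrt{1-\rho^2}\,W^{\perp}$ with $W^{\perp}$ a Brownian motion independent of $W$, and using that $\widehat{W}_t = \int_0^t K(t-r)\,\dd W_r$ is a Wiener integral, the covariance formula for Wiener integrals gives
\begin{equation*}
  \cov(\widehat{W}_t, B_s) = \rho\int_0^{t\wedge s} K(t-r)\,\dd r = \rho\int_u^t K(x)\,\dd x,
\end{equation*}
where the second step is the substitution $x = t-r$ and $u = t - t\wedge s$, which indeed lies in $[0,t]$; when $s\ge t$ this is $u = 0$ and the integral is just $\rho$ times the variance-type integral already met in Lemma~\ref{lem:var-logfbm}.

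Next I would split $\int_u^t K(x)\,\dd x = \int_u^v K + \int_v^t K$ with $v = (u\vee\chi)\wedge t$, noting that $u\le v\le t$ and that --- up to degenerate (measure-zero) intervals --- the integrand on $[u,v]$ is the log-modulated branch $K(x) = C\zeta^{-p}x^{H-1/2}\log(1/x)^{-p}$ (because $v>u$ forces $v\le\chi$) while on $[v,t]$ it is the pure power law $K(x) = Cx^{H-1/2}$ (because $v<t$ forces $v\ge\chi$). The power-law piece is elementary and equals $C(t^{H+1/2}-v^{H+1/2})/(H+1/2)$, the last term in the claim. For the log-modulated piece, write $\int_u^v = \int_0^v - \mathds{1}_{u>0}\int_0^u$ and apply Lemma~\ref{lem:exp-integral} with $a = 1/2 - H\le 1$ and $b = p > 1$, so that $\int_0^w x^{H-1/2}\log(1/x)^{-p}\,\dd x = \log(1/w)^{1-p}\Erm_p((H+1/2)\log(1/w))$; the indicator is needed exactly because for $u=0$ the lower integral vanishes while $\log(1/u)$ is undefined. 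Collecting the pieces yields the asserted formula.

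The only real bookkeeping is checking that the single closed form with $v = (u\vee\chi)\wedge t$ reproduces all positions of $u,t$ relative to $\chi$: for $u\ge\chi$ one has $v=u$ and the two $\Erm_p$-terms cancel, leaving only the power-law term; for $u<t<\chi$ one has $v=t$ and the power-law term vanishes; and for $u<\chi\le t$ one has $v=\chi$ with both surviving. I expect no analytic difficulty --- integrability of $x^{H-1/2}\log(1/x)^{-p}$ near $0$, needed to legitimately apply Lemma~\ref{lem:exp-integral}, is already guaranteed by $p>1$, exactly as in Lemma~\ref{lem:var-logfbm} --- so the (mild) case distinction is the main obstacle.
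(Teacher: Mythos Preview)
Your proposal is correct and follows essentially the same route as the paper: reduce to $\rho\int_u^t K(x)\,\dd x$ via the Wiener-integral covariance, split at $v=(u\vee\chi)\wedge t$, and apply Lemma~\ref{lem:exp-integral} to the log-modulated piece while integrating the power-law piece directly. Your explicit verification of the three cases $u\ge\chi$, $u<t<\chi$, $u<\chi\le t$ is a welcome addition; the only inaccuracy is the throwaway remark that for $u=0$ one recovers ``the variance-type integral'' of Lemma~\ref{lem:var-logfbm} --- that lemma integrates $K^2$, not $K$ --- but this is peripheral and does not affect the argument.
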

\begin{proof}
  Direct computations reveal that
  \begin{multline*}
    \cov\left( \hatW_t, \, B_s \right) = E\left[ \hatW_t  B_s \right] = \rho
    \int_0^{t \wedge s} K(t-r) \dd r = \rho \int_u^t K(r) \dd r\\ = \rho
    \int_u^v K(r) \dd r + \rho \int_v^t K(r) \dd r
    = C \zeta^{-p} \rho \int_u^v r^{H-1/2} \log(1/r)^{-p} \dd r + C \rho
    \int_v^t r^{H-1/2} \dd r.
  \end{multline*}
  For the first integral, we use Lemma~\ref{lem:exp-integral} with $a = 1/2 -
  H$ and $b = p$, to obtain
  \begin{multline}\label{eq:general integral computation}
    \int_u^v r^{H-1/2} \log(1/r)^{-p} \dd r =  \bigg[
      \log\left( \f{1}{v} \right)^{1-p} \Erm_p\left( (H+1/2) \log\left(
          \f{1}{v} \right) \right)\\
          - \log\left( \f{1}{u}
      \right)^{1-p} \Erm_p\left( (H+1/2) \log\left( \f{1}{u} \right)
      \right)  \bigg],
  \end{multline}
  the second integral is trivial.
\end{proof}

\section{Fukasawa's method }\label{sec:Fukasawas method}
We give a short introduction to the asymptotic expansion for stochastic volatility models outlined in  \cite{fukasawa2011}, adapted to the case of Gaussian noise  driving the asset price and the volatility. This simplifies certain computations and conditions, and thus the results have been slightly changed accordingly.

Let $(\Omega,\cF^n,\{\cF_t^n\}_{t\in [0,T]},\PP)$ be a filtered probability space for each $n\in \NN$, where a continuous martingale $X^n$ lives. Consider an asset price process $S^n:[0,T]\times \Omega \rightarrow \RR^d$  given  by 
\begin{equation}\label{Z rep}
\begin{aligned}
S_t^n&=\exp(Z^n_t)
\\
Z_t^n &=Z_0+R(t)+A_t^n+X_t^n +\int_0^tg_s^n\dd W_s. 
\end{aligned}
\end{equation}
Here $\{g_t^n\}_{t\in [0,T]}$ is a process adapted to the filtration $\{\cF_t^n\}$. 
The Brownian motion $\{W_t\}_{t\in [0,T]}$ is independent of the martingale $X^n$,  and is correlated with the stochastic process $t\mapsto g_t^n$, in order to capture  the leverage effect. The function $R$ is supposed to reflect the interest rate and is often assumed to be constant, and in applications typically chosen to be zero.   $A^n$ is a drift term, such that $S^n$ is a martingale. 
Denote by $M_t^n$ the martingale part of $Z^n$, i.e. $M_t^n =X_t^n+\int_0^t g_s^n\dd W_s$. It is readily seen that the quadratic variation of $M^n$ is given by 
\begin{equation*}
\langle M^n\rangle =\langle X^n\rangle +\int_0^\cdot |g_s^n|^2 \dd s.
\end{equation*}
Throughout the text we will refer to $(R,A^n,X^n,g^n)$ as a stochastic volatility model. 
In applications, we will assume that the following hypothesis holds for the model $(R,A^n,X^n,g^n)$: 

\begin{hyp}\label{hypothesis 1}
For any null sequence $\{\epsilon_n\}$, there exists a sequence $\Sigma_n$ with $\Sigma=\lim_{n\rightarrow \infty} \Sigma_n>0$ such that for all $n\in \NN$ 
\begin{equation*}
\cD^n:=\epsilon_n^{-1}(\Sigma_n^{-1}\langle M^n \rangle_T-1)\qquad {\rm and} \qquad \frac{1}{\int_0^t |g_s^n|^2\dd s}
\end{equation*}
are bounded in $L^p(\Omega)$ for any $p>0$ Moreover,  $\cD^n$ and $\Sigma_n^{-\frac{1}{2}}M^n_T$ converges weakly to a random variable, say $(N_1,N_2)$. 
\end{hyp}

Fukasawa derives expansions of claims constructed from the model $(R,A^n,X^n,g^n)$, which can be written as $\EE[F(Z_T)]$, where $F(z)=e^{-R(T)}f(S_0\exp(z))$. 
A particular case of interest for the current article  is when the martingale part $M^n$ satisfies Hypothesis \ref{hypothesis 1} and the random variables $N_1$ and $N_2$ are normally distributed. In this case, the martingale expansion can be used to give an asymptotic expansion of the implied volatility in terms of the vol-of-vol parameter. 
The following theorem is a combination of Theorem 2.4 and Corollary 2.6 found in  \cite{fukasawa2011}. 

\begin{thm}\label{thm:implied VOl}
Suppose $F$ is a Borel measurable function of polynomial growth, and that Hypothesis \ref{hypothesis 1} holds with $N_1$ and $N_2$ being normally distributed. Denote by $\sigma_n^2T=\Sigma_n$.  Then the Black-Scholes implied volatility of European put/call options
can be expanded as
\begin{equation}\label{eq:IV}
\sigma_{BS}=\sigma_n\left(1+\frac{\epsilon_n}{2}\left(\delta-\rho_{12} d_2\right)\right)+o(\epsilon_n),
\end{equation}
where 
\begin{equation*}
\delta \coloneqq \EE[N_1],\qquad \rho_{12} \coloneqq \EE[N_1N_2],
\end{equation*}
 and 
\begin{equation*}
    d_1 \coloneqq \frac{\log(S/K)+r+\Sigma_n/2}{\sqrt{\Sigma_n}},\qquad d_2 \coloneqq d_1-\sqrt{\Sigma_n}.
\end{equation*}
\end{thm}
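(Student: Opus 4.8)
The plan is to read the statement as the specialization of Fukasawa's martingale expansion \cite[Theorem~2.4 and Corollary~2.6]{fukasawa2011} to the Gaussian setting of \eqref{Z rep}, so the work splits into three parts: casting the model in his framework, checking that Hypothesis~\ref{hypothesis 1} supplies (indeed strengthens) the conditions he requires, and exploiting the extra assumption that $(N_1,N_2)$ is Gaussian to make the expansion coefficients explicit. First I would record that, since $S^n$ is a martingale and $S^n_t = S_0 e^{R(t)}\,\mathcal{E}(M^n)_t$ for the stochastic exponential $\mathcal{E}$, the drift is forced to be $A^n_t = -\tfrac12\langle M^n\rangle_t$, so that $Z^n_T - Z_0 - R(T) = M^n_T - \tfrac12\langle M^n\rangle_T$ with $\langle M^n\rangle_T = \Sigma_n(1+\epsilon_n\cD^n)$; this is exactly the structure \cite{fukasawa2011} expands around. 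The $L^p$-boundedness of $\cD^n$ and of $1/\!\int_0^t|g_s^n|^2\dd s$ in Hypothesis~\ref{hypothesis 1} supply, respectively, the integrability needed to integrate the Edgeworth correction against payoffs of polynomial growth and the non-degeneracy of the Gaussian reference density that makes the expansion of the law of $Z^n_T$ meaningful.

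Next I would invoke \cite[Theorem~2.4]{fukasawa2011} to obtain, for $F(z) = e^{-R(T)}f(S_0 e^{z})$ of polynomial growth, an expansion of the form
\begin{equation*}
  \EE[F(Z^n_T)] = \EE\!\left[ F\!\left( Z_0 + R(T) + \sqrt{\Sigma_n}\,\mathcal{G} - \tfrac12\Sigma_n \right)\bigl( 1 + \epsilon_n\, \Psi(\mathcal{G}) \bigr)\right] + o(\epsilon_n),
\end{equation*}
where $\mathcal{G}$ is the standard Gaussian limit of $\Sigma_n^{-1/2}M^n_T$ — it has variance one because $\EE[(M^n_T)^2] = \EE[\langle M^n\rangle_T] = \Sigma_n(1+o(1))$ and mean zero by the martingale property, so $N_2\sim N(0,1)$ — and the correction $\Psi$ is determined by the joint limit law of the quadratic-variation perturbation and $M^n_T$. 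Here the Gaussian hypothesis pays off: $\EE[N_1\mid N_2] = \delta + \rho_{12}N_2$ since $\var(N_2)=1$, $\EE[N_2]=0$, $\delta = \EE[N_1]$ and $\rho_{12} = \EE[N_1N_2]$, so $\Psi$ collapses to a degree-one Hermite (affine in $\mathcal{G}$) correction, the leverage contributing the $\mathcal{G}$-linear part and the mean level the constant part. Substituting the European put/call payoff and evaluating the two Gaussian integrals (against $1$ and against $\mathcal{G}$) gives $\EE[F(Z^n_T)] = C^{BS}(\sigma_n) + \epsilon_n\, r_n + o(\epsilon_n)$, with $r_n$ a combination of the Black--Scholes vega $\partial_\sigma C^{BS}(\sigma_n)$ and of $\partial_\sigma^2 C^{BS}(\sigma_n)$, the structure of the Black--Scholes Greeks forcing the coefficients to assemble into $\tfrac12(\delta - \rho_{12} d_2)$.

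Finally, following \cite[Corollary~2.6]{fukasawa2011}, I would invert the Black--Scholes formula: writing the price as $C^{BS}(\sigma_{BS})$ and matching it with $C^{BS}(\sigma_n) + \epsilon_n r_n + o(\epsilon_n)$, a first-order Taylor expansion in $\sigma$ together with $\partial_\sigma C^{BS}(\sigma_n) > 0$ (the vega, proportional to $\phi(d_1)$) yields $\sigma_{BS} = \sigma_n\bigl(1 + \tfrac{\epsilon_n}{2}(\delta - \rho_{12}d_2)\bigr) + o(\epsilon_n)$, which is \eqref{eq:IV}. I expect the main obstacle to lie not in the Gaussian algebra but in the analytic core imported from \cite[Theorem~2.4]{fukasawa2011}: upgrading the weak convergence of $(\cD^n,\Sigma_n^{-1/2}M^n_T)$, under the $L^p$-bounds of Hypothesis~\ref{hypothesis 1}, to convergence of $\epsilon_n^{-1}\bigl(\EE[F(Z^n_T)] - C^{BS}(\sigma_n)\bigr)$ for the unbounded functionals $F$ coming from call/put payoffs (the requisite uniform integrability and the validity of the density expansion), and verifying that the present simplifications — Gaussian driving noise, $W$ independent of $X^n$ — genuinely reduce his hypotheses to Hypothesis~\ref{hypothesis 1}.
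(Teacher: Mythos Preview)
Your proposal is correct and follows the same approach as the paper: the paper itself offers no proof of this theorem, stating only that it is ``a combination of Theorem 2.4 and Corollary 2.6 found in \cite{fukasawa2011}.'' Your sketch of how to specialize Fukasawa's martingale expansion to the Gaussian setting and then invert the Black--Scholes formula is exactly what underlies that citation, and in fact provides more detail than the paper does.
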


The following example is an application of the above theorem  to the case when the volatility is assumed to be driven by a Gaussian Volterra process. This particular example will motivate the subsequent discussions on volatility models driven by super-rough processes. As this example is essentially \cite[Sec.~3.3]{fukasawa2011} adapted to general Volterra processes, we will sometimes refer to this particular case as \emph{Fukasawa's example}.

\begin{ex}[Fukasawa's example with volatility driven by Gaussian Volterra processes]\label{Fuk Ex}
Suppose  $g$  is twice differentiable, positive function, with derivatives bounded away from zero. Consider the asset price dynamics given by 
\begin{equation*}
\begin{aligned}
S_t&=S_0\exp(Z_t)
\\
Z_t &= R(t)-\frac{1}{2} \int_0^tg(Y^n_s)^2\dd s + \int_0^t g(Y^n_s)[\rho \dd W_s+\sqrt{1-\rho^2}\dd W_s'], 
\end{aligned}
\end{equation*}
where for a null sequence $\{\epsilon_n\}$  we specify 
\begin{equation*}
Y_s^n = y+\epsilon_n \hatW_s,\qquad \hatW_t:=\int_0^t K(t-s)\dd W_s,
\end{equation*}
and $K$ is a square integrable, but possibly singular Volterra kernel, and
the two processes $\{W_t\}_{t\in [0,T]}$ and $\{W'_t\}_{t\in [0,T]}$ are independent Brownian motions. The parameter $\rho\in[-1,1]$ is the coefficient determining the correlation between $Y^n$ and $Z$. Observe also that $(\int_0^T \hatW_s\dd s,W_T)$ is normally distributed with 
\begin{equation*}
\EE[\int_0^T \hatW_s\dd s\, W_T]=\int_0^T \int_0^tK(t-s)\dd s\dd t.
\end{equation*}
Referring to \eqref{Z rep}, we now have $X^n=\int_0^\cdot g(Y^n_s)\sqrt{1-\rho^2}\dd W'_s$ and $A^n=-\frac{1}{2}\int_0^\cdot g(Y^n_s)^2 \dd s$.  Furthermore, $M^n_t := \int_0^t g(Y^n_s)[\rho \dd W_s+\sqrt{1-\rho^2}\dd W'_s]$, and we see that 
\begin{equation*}
\langle M^n\rangle_T = \int_0^T g(Y^n_s)^2\dd s.
\end{equation*} 
Set $\Sigma_n=\Sigma=g(y)^2 T$. Invoking the assumption  that $g$ is bounded away from $0$,  we see that for $n\in \NN$
\begin{equation*}
\epsilon_n^{-1}(\frac{\langle M^n\rangle_T}{g(y)^2T}-1) \qquad {\rm and} \qquad \frac{1}{\int_0^Tg^2(Y^n_s)\dd s} 
\end{equation*}
are both bounded in $L^p(\Omega)$ independently of $n$ for any $p>0$. Furthermore $\Sigma_n^{-\frac{1}{2}} M_T^n$ and $ \epsilon_n^{-1}(\frac{\langle M^n\rangle_T}{g(y)^2T}-1)$ are both seen to be asymptotically normally distributed, and thus the conditions in Hypothesis \ref{hypothesis 1} are satisfied. 
It follows directly  that $\delta=\EE[N_1]=0$. For the term $\rho_{12}$ from Theorem \ref{thm:implied VOl},  a second order Taylor expansion of $g$ around the point $y$ yields that 
\begin{equation*}
    g(Y^n_s)=g(y)+g'(y)\epsilon_n\widehat{W}_s +\int_0^1 g''(y +\theta\epsilon_n \widehat{W}_s)\dd \theta \, \epsilon_n^2 (\widehat{W_s})^2,
\end{equation*}
and thus using that $g$ is twice differentibale with bounded derivatives away from zero, and the independence of $W$ and $W'$ we have that 
\begin{equation}\label{rho_12 term in ex}
\rho_{12}(T) =\EE[N_1N_3]= \frac{g^\prime(y)\rho}{g(y)T^{3/2}}\int_0^T\EE[\hatW_s W_T]\dd s= \frac{g^\prime(y)\rho}{g(y)T^{3/2}}\int_0^T\int_0^t K(t-s)\dd s \dd t.
\end{equation} 
It follows from Theorem \ref{thm:implied VOl} that the Black-Scholes implied volatility  is given by 
\begin{equation}\label{eq:example IV}
\sigma_{BS} = g(y)\left(1-\frac{\epsilon_n\rho_{12}}{2} d_2\right), 
\end{equation}

In subsequent sections, we will investigate this term in more detail for the particular choice of the Volterra kernel $K$ given in \eqref{eq:log-kernel}. 
\end{ex}

\section{Skew expansions with log-fractional Brownian motion}
\label{sec:skew-expansion}

We now apply the small vol-of-vol expansion to log-modulated rough volatility models. In the first step, we compute the term $\rho_{12}$ for such models.

\subsection{$\bm{\rho_{12}}$ in the case of log-fractional Brownian motion}
\label{sec:fukas-term-rho_12}

We will compute the term $\rho_{12}$ given in Example \ref{Fuk Ex} when the Volterra process is given as a log-fractional Brownian motion.  Recall from \eqref{rho_12 term in ex} that $\rho_{12}$ is given by 
\begin{equation}
  \label{eq:rho-13}
  \rho_{12} = \f{g^\prime(y) \rho}{g(y) T^{3/2}} \int_0^T E\left[ \hatW_s W_T 
  \right] \dd s , 
\end{equation}
where $\rho$ is the correlation coefficient between the Brownian noises. 
We compute the integral on the r.h.s.~under the assumption that $T$ is small,
more precisely, $T \le \chi$. (Keep in mind that we are eventually going to look for
asymptotics for $T \to 0$.)

\begin{lem}
  \label{lem:integrated-covariance}
 Let $T>0$ and $\{W_t\}_{t\in [0,T]}$ be a Brownian motion, and define the log-fractional Brownian motion  $\widehat{W}_t=\int_0^t K(t-s)\dd W_s$, where the kernel is given as in \eqref{eq:log-kernel}. Then we have 
  \begin{equation*}
    \int_0^T \EE\left[ W_T \hatW_s \right] \dd s = \int_0^T \int_0^s K(s-r) \dd r \dd s = I_1(T \wedge \chi) + \mathbbm{1}_{T>\chi} \left( I_2(T,\chi) + I_3(T, \chi) \right),
  \end{equation*}
  where
  \begin{align*}
      I_1(T) &\coloneqq C \zeta^{-p} \log\left(\f{1}{T} \right)^{1-p} \biggl[ T
     \Erm_p\left( (H+1/2) \log\left(
        \f{1}{T} \right) \right)  - \Erm_p\left( (H+3/2) \log\left( \f{1}{T} \right)
    \right) \biggr],\\
    I_2(T, \chi) &\coloneqq \f{C}{H+1/2} \left( \f{T^{H+3/2} - \chi^{H+3/2}}{H+3/2} - (T - \chi)\chi^{H+1/2} \right),\\
    I_3(T, \chi) &\coloneqq C \zeta^{-p} (T - \chi)  \log\left( \f{1}{\chi} \right)^{1-p} \Erm_p\left( (H+1/2) \log\left(\f{1}{\chi}\right)\right).
  \end{align*}
\end{lem}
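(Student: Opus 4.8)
The plan is to reduce the double integral to a single one by Fubini's theorem and then to split it according to the two regimes of the kernel \eqref{eq:log-kernel}, applying Lemma~\ref{lem:exp-integral} to the log-modulated part.

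First I would record the (essentially trivial) covariance identity. Since $\hatW_s = \int_0^s K(s-r)\,\dd W_r$ is a Wiener integral and $r \le s \le T$, splitting $W_T = \int_0^s \dd W_r + \int_s^T \dd W_r$ and using independence of disjoint increments together with the It\^o isometry gives $\EE[W_T \hatW_s] = \int_0^s K(s-r)\,\dd r = \int_0^s K(u)\,\dd u$, the last step by the substitution $u = s-r$. Since $K \ge 0$, Tonelli's theorem applies to $\int_0^T \int_0^s K(u)\,\dd u\,\dd s$, and exchanging the order of integration over the region $\{0 \le u \le s \le T\}$ turns it into
\[
  \int_0^T K(u)(T-u)\,\dd u \;=\; T \int_0^T K(u)\,\dd u \;-\; \int_0^T u K(u)\,\dd u .
\]

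Next I would treat the two cases separately. If $T \le \chi$, then $K(u) = C\zeta^{-p} u^{H-1/2}\log(1/u)^{-p}$ on all of $(0,T)$, so both integrals above are of the type covered by Lemma~\ref{lem:exp-integral}: for $T\int_0^T K(u)\,\dd u$ apply it with $a = 1/2 - H$ and $b = p$, and for $\int_0^T u K(u)\,\dd u$ apply it with $a = -H-1/2$ and $b = p$ (in both cases $a \le 1$, $p > 1$, and $u < T \le \chi < 1$), yielding $C\zeta^{-p}\log(1/T)^{1-p}\Erm_p((H+1/2)\log(1/T))$ and $C\zeta^{-p}\log(1/T)^{1-p}\Erm_p((H+3/2)\log(1/T))$ respectively. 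Collecting the two contributions produces exactly $I_1(T)$. If instead $T > \chi$, I would split $\int_0^T K(u)(T-u)\,\dd u = \int_0^\chi + \int_\chi^T$. On $[0,\chi]$ the kernel is still log-modulated, so the same two applications of Lemma~\ref{lem:exp-integral} (now with upper limit $\chi$) apply; writing $T = \chi + (T-\chi)$ in the prefactor of the $T\int_0^\chi K$ term splits this contribution into $I_1(\chi)$ plus the leftover $I_3(T,\chi)$. On $[\chi,T]$ the kernel is the pure power law $C u^{H-1/2}$, so $\int_\chi^T C u^{H-1/2}(T-u)\,\dd u = C\bigl(T\tfrac{T^{H+1/2}-\chi^{H+1/2}}{H+1/2} - \tfrac{T^{H+3/2}-\chi^{H+3/2}}{H+3/2}\bigr)$ is elementary, and a short rearrangement over the common denominator $(H+1/2)(H+3/2)$ identifies it with $I_2(T,\chi)$. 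Adding the three pieces gives the claimed formula.

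The computations are entirely routine; the only points requiring a little care are: (i) checking that the hypotheses $a \le 1$, $b > 1$ of Lemma~\ref{lem:exp-integral} hold for both choices $a = 1/2-H$ and $a = -H-1/2$ (they do for every admissible $H \in [0,1/2)$ and $p>1$, and note the formulas remain valid at $H=0$ since $(H+1/2)\log(1/T)>0$); (ii) the bookkeeping in the case $T > \chi$, in particular splitting $I_3$ off correctly from the $[0,\chi]$-integral; and (iii) the algebraic identification of the elementary $[\chi,T]$-integral with $I_2(T,\chi)$. I do not expect any genuine obstacle.
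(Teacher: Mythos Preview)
Your proposal is correct and follows essentially the same approach as the paper: reduce to $\int_0^T\int_0^s K(u)\,\dd u\,\dd s$, split according to the kernel regime at $\chi$, and apply Lemma~\ref{lem:exp-integral} with $a=1/2-H$ and $a=-H-1/2$. The only cosmetic difference is that you first collapse the double integral to $\int_0^T K(u)(T-u)\,\dd u$ via Fubini and then split at $u=\chi$, whereas the paper splits the outer $s$-integral first (writing $I(T)=I(\chi)+\int_\chi^T\int_\chi^s+\int_\chi^T\int_0^\chi$); your organization is slightly more streamlined but the computations and identifications of $I_1,I_2,I_3$ are identical.
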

\begin{proof}
  It is tempting to integrate the formula in Lemma~\ref{lem:cov-log-fbm-bm},
  but we were not able to find a closed form expression this way. Rather, let
  us start from scratch. Clearly, we have that
  \begin{equation*}
      \int_0^T\EE[W_T\widehat{W}_s]\dd s = \int_0^T \int_0^s K(s-r)\, \dd r\, \dd s \eqqcolon I(T).
  \end{equation*}
  
  We first assume that $T \le \chi$.
 Using the representation of the kernel given in \eqref{eq:log-kernel} we see that
  \begin{align*}
   \int_0^{T} \int_0^s K(s-r)\dd r\dd s 
    &= \int_0^{T} C \zeta^{-p} \int_0^s (s-r)^{H-1/2} \log\left( \f{1}{s-r}
      \right)^{-p} \dd r \, \dd s\\
    &= C \zeta^{-p} \int_0^{T} \int_0^s r^{H-1/2} \log\left( \f{1}{r}
      \right)^{-p} \dd r \, \dd s\\
    &= C \zeta^{-p} \int_0^{T} \int_r^{T} \dd s \ r^{H-1/2} \log\left( \f{1}{r}
      \right)^{-p} \dd r \\
    &= C \zeta^{-p} \left[ {T} \int_0^{T} r^{H-1/2} \log\left( \f{1}{r}
      \right)^{-p} \dd r - \int_0^{T} r^{H+1/2} \log\left( \f{1}{r}
      \right)^{-p} \dd r \right].
  \end{align*}
  By Lemma~\ref{lem:exp-integral} it then follows that
  \begin{gather*}
    \int_0^{T} r^{H-1/2} \log\left( \f{1}{r}
    \right)^{-p} \dd r = 
    \log\left(\f{1}{{T}} \right)^{1-p} \Erm_p\left( (H+1/2) \log\left(
        \f{1}{{T}} \right) \right),\\
    \int_0^{T} r^{H+1/2} \log\left( \f{1}{r}
    \right)^{-p} \dd r =  \log\left(\f{1}{{T}}
    \right)^{1-p} \Erm_p\left( (H+3/2) \log\left( \f{1}{{T}} \right)
    \right). 
  \end{gather*}
 Putting the terms together, we obtain
 \begin{multline*}
     I(T) = I_1(T) =  C \zeta^{-p} \log\left(\f{1}{T} \right)^{1-p} \biggl[ T
     \Erm_p\left( (H+1/2) \log\left(
        \f{1}{T} \right) \right) \\
        -\Erm_p\left( (H+3/2) \log\left( \f{1}{T} \right)
    \right) \biggr].
 \end{multline*}
 
 Let us now consider the case $T > \chi$. The integral can then naturally be split as
 \begin{align*}
     I(T) &= \int_0^{\chi} \int_0^s K(s-r) \dd r \dd s + \int_\chi^T \int_0^{s-\chi} K(s-r) \dd r \dd s 
     + \int_\chi^T \int_{s-\chi}^s K(s-r) \dd r \dd s\\
     &= I(\chi) + \int_\chi^T \int_\chi^s K(u) \dd u \dd s + \int_\chi^T \int_0^\chi K(u) \dd u \dd s,
 \end{align*}
 noting that $I(\chi)$ is already known. An elementary calculation gives us the second term,
 \begin{multline*}
     \int_\chi^T \int_\chi^s K(u)\, \dd u\, \dd s = C \int_\chi^T \int_\chi^s u^{H-1/2}\, \dd u\, \dd s = \f{C}{H+1/2} \int_\chi^T \left( s^{H+1/2} - \chi^{H+1/2} \right) \dd s \\
     = \f{C}{H+1/2} \left( \f{T^{H+3/2} - \chi^{H+3/2}}{H+3/2} - (T - \chi)\chi^{H+1/2} \right).
 \end{multline*}
 Finally, regarding the third term we do a substitution of variables and apply Lemma~\ref{lem:exp-integral} to obtain 
 \begin{align*}
     \int_\chi^T \int_0^\chi K(u) \dd u \dd s &= C \zeta^{-p} (T - \chi) \int_0^\chi \log\left( \f{1}{u} \right)^{-p} u^{H-1/2} \dd u\\ 
     &= C \zeta^{-p} (T - \chi)  \log\left( \f{1}{\chi} \right)^{1-p} \Erm_p\left( (H+1/2) \log\left(\f{1}{\chi}\right)\right). \qedhere
 \end{align*}
\end{proof}

\subsection{Asymptotic expansion for Example~\ref{Fuk Ex}}

We continue with a discussion of Example~\ref{Fuk Ex}, when the volatility depends on a log-fBm. As we have already computed $\rho_{12}$, we have all ingredients for the asymptotic expansion in terms of small vol-of-vol. We are also interested in the short time behaviour of this term, which relies on the following well known asymptotic expansion of the exponential integral $\Erm_p$ :
\begin{equation}
  \label{eq:exp-int-expansion}
  \Erm_p(x) \sim \frac{e^{-x}}{x}\left[1 - \frac{p}{x} + \frac{p(p+1)}{x^2} \pm
  \cdots \right] \text{ as } x \to \infty.
\end{equation}

\begin{lem}
  \label{lem:rho13-expansion}
  Let $g$ be a positive  twice continuously differentiable function with derivatives bounded away from $0$. The term $\rho_{12}$ in \eqref{eq:rho-13} satisfies the asymptotic expansion
  \begin{equation*}
    \rho_{12} = \f{g^\prime(y)}{g(y)}\frac{C \zeta^{-p}\rho}{(H+1/2)(H+3/2)} 
    \log\left(\f{1}{T} \right)^{-p} T^{H}(1+o(1))
  \end{equation*}
  as $T \to 0$.
\end{lem}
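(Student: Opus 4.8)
The plan is to substitute the closed-form expression for $\int_0^T \EE[W_T\hatW_s]\,\dd s$ obtained in Lemma~\ref{lem:integrated-covariance} into the definition \eqref{eq:rho-13} of $\rho_{12}$, and then to read off the $T\to0$ behaviour from the asymptotic expansion \eqref{eq:exp-int-expansion} of the exponential integral. Since we only care about the limit $T\to0$, we may assume $T\le\chi$, in which case Lemma~\ref{lem:integrated-covariance} gives
\begin{equation*}
  \int_0^T \EE[W_T\hatW_s]\,\dd s = I_1(T) = C\zeta^{-p}\log\left(\f{1}{T}\right)^{1-p}\left[T\,\Erm_p\left((H+1/2)\log\left(\f{1}{T}\right)\right) - \Erm_p\left((H+3/2)\log\left(\f{1}{T}\right)\right)\right].
\end{equation*}

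Next I would apply \eqref{eq:exp-int-expansion} with the two arguments $x_1\coloneqq(H+1/2)\log(1/T)$ and $x_2\coloneqq(H+3/2)\log(1/T)$, both of which tend to $+\infty$ as $T\to0$ for every fixed $H\in[0,1/2)$. Using $e^{-x_1}=T^{H+1/2}$ and $e^{-x_2}=T^{H+3/2}$, the leading terms are $\Erm_p(x_1)=\f{T^{H+1/2}}{(H+1/2)\log(1/T)}(1+o(1))$ and $\Erm_p(x_2)=\f{T^{H+3/2}}{(H+3/2)\log(1/T)}(1+o(1))$, so that
\begin{equation*}
  T\,\Erm_p(x_1)-\Erm_p(x_2) = \f{T^{H+3/2}}{\log(1/T)}\left(\f{1}{H+1/2}-\f{1}{H+3/2}\right)(1+o(1)) = \f{T^{H+3/2}}{(H+1/2)(H+3/2)\log(1/T)}(1+o(1)).
\end{equation*}
Multiplying by $C\zeta^{-p}\log(1/T)^{1-p}$ gives $I_1(T)=\f{C\zeta^{-p}}{(H+1/2)(H+3/2)}T^{H+3/2}\log(1/T)^{-p}(1+o(1))$, and dividing by $T^{3/2}$ as dictated by \eqref{eq:rho-13} yields exactly the asserted formula.

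The only point that needs care is the subtraction $T\,\Erm_p(x_1)-\Erm_p(x_2)$: both contributions are of the same order $T^{H+3/2}/\log(1/T)$, but carry the distinct coefficients $1/(H+1/2)$ and $1/(H+3/2)$, so the leading order does \emph{not} cancel and the difference stays of order $T^{H+3/2}/\log(1/T)$. The next-order corrections supplied by \eqref{eq:exp-int-expansion} are of order $T^{H+3/2}/\log(1/T)^2$, hence genuinely smaller and absorbed into the $o(1)$; and since $H$ enters only through the bounded quantities $H+1/2$ and $H+3/2$, the expansion applies uniformly over $H$ in any fixed subinterval of $[0,1/2)$, in particular it covers $H=0$. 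I do not anticipate a serious obstacle here: once Lemma~\ref{lem:integrated-covariance} and \eqref{eq:exp-int-expansion} are available the statement is essentially a bookkeeping exercise.
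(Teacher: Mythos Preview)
Your proposal is correct and follows essentially the same route as the paper's proof: restrict to $T\le\chi$, plug the two asymptotics for $\Erm_p$ into $I_1(T)$ from Lemma~\ref{lem:integrated-covariance}, combine, and divide by $T^{3/2}$. Your explicit remark that the leading terms in $T\,\Erm_p(x_1)$ and $\Erm_p(x_2)$ carry distinct coefficients $1/(H+1/2)\neq 1/(H+3/2)$---so that the subtraction of two same-order $(1+o(1))$ expressions is legitimate---is a point the paper glosses over, and is worth keeping.
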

\begin{proof}
  Using the asymptotic expansion in~\eqref{eq:exp-int-expansion} we have
  \begin{gather}\label{eq:asym:Ep1}
    \Erm_p\left( (H+1/2) \log\left( \f{1}{T} \right) \right) = \frac{T^{H+1/2}}{(H+1/2) \log\left(
        \frac{1}{T} \right)} (1+o(1)),\\
        \label{eq:asym:Ep2}
    \Erm_p\left( (H+3/2) \log\left( \f{1}{T} \right) \right) = \frac{
        T^{H+3/2}}{(H+3/2) \log\left(
        \frac{1}{T} \right)} (1+o(1)),
  \end{gather}
  as $T \to 0$. Hence (as $T < \chi$ eventually), we obtain 
  \begin{equation*}
    \int_0^T \EE\left[ W_T \hatW_s \right] \dd s = \frac{C \zeta^{-p}
      }{(H+1/2)(H+3/2)} 
    \log\left(\f{1}{T} \right)^{-p} T^{H+3/2}(1+o(1)).
  \end{equation*}
  Recalling~\eqref{eq:rho-13}, we have
  \begin{equation*}
    \rho_{12} = \f{g^\prime(y) \rho}{g(y) T^{3/2}} \int_0^T E\left[ W_T \hatW_s
    \right] \dd s = \f{g^\prime(y)}{g(y)}\frac{C \zeta^{-p}\rho}{(H+1/2)(H+3/2)} 
    \log\left(\f{1}{T} \right)^{-p} T^{H}(1+o(1)). \qedhere
  \end{equation*}
\end{proof}

Let us now look again at the implied volatility found in Theorem  \ref{thm:implied VOl} and in Example \ref{Fuk Ex}. By formula \eqref{eq:IV} we have
\begin{equation*}
  \sigma_{BS} = \sigma\left(1 - \frac{\rho_{12}d_2}{2} \epsilon_n \right)
  (1+o(\epsilon_n)),
\end{equation*}
Following Example \ref{Fuk Ex}, we set $R(t)\coloneqq rt$ for a constant $r>0$,   $\sigma \coloneqq g(y)$,  $\Sigma \coloneqq g(y)^2 T$, and
\begin{equation*}
  d_2 = \frac{\log\left( \frac{S_0}{K} \right) + r -
    \frac{\Sigma}{2}}{\sqrt{\Sigma}}. 
\end{equation*}
Hence, the part of the leading order term depending on log-moneyness
$\log\left( \frac{S_0}{K} \right)$ is
\begin{equation*}
  -\frac{\sigma}{2} \rho_{12} \frac{\log\left( \frac{S_0}{K}
    \right)}{g(y)\sqrt{T}} \epsilon_n=  - a \rho
  \log\left(\f{1}{T} \right)^{-p} T^{H-1/2} \log\left( \frac{S_0}{K}
  \right) \epsilon_n (1+o_T(1)),
\end{equation*}
with
\begin{align}\label{eq:a}
  a &\coloneqq \frac{1}{2} \f{g^\prime(y)}{g(y)}\frac{C \zeta^{-p}}{(H+1/2)(H+3/2)} \nonumber\\
  &= 
  \begin{cases}
  \frac{1}{2} \f{g^\prime(y)}{g(y)} \left[ \sqrt{\Erm(2H/\zeta)/\zeta + \frac{1-\exp(-2H/\zeta)}{2H}} \zeta^p (H+1/2) (H+3/2) \right]^{-1}, & H > 0,\\
  \frac{3}{8} \f{g^\prime(y)}{g(y)} \zeta^{1/2-p} \sqrt{\frac{2p-1}{2p}}, & H = 0.
  \end{cases}
\end{align}

\begin{rem}
    \label{rem:zeta-p-vol-of-vol}
    As the skew asymptotic is linear in $a = a_{H,\zeta,p}$, we may think of these model parameters to contribute to \emph{vol-of-vol}. It turns out that $a$ varies considerably as a function of $\zeta$ and $p$ for fixed roughness $H$. The actual asymptotic skew formula is, fortunately, much more stable, see Figure~\ref{fig:sberg_p_zeta}.
\end{rem}

These considerations lead to the following theorem regarding the ATM volatility skew for small vol-of-vol and short maturity $T$: 
\begin{thm}\label{thr:skew}
  The implied volatility in Example~\ref{Fuk Ex} with log-modulated fBm satisfies
  \begin{equation*}
      \sigma_{BS} = g(y) \left(1 - \frac{\rho_{12}d_2}{2} \epsilon_n \right) (1+o(\epsilon_n))
  \end{equation*}
  with $\rho_{12}$ given by~\eqref{eq:rho-13} together with Lemma~\ref{lem:integrated-covariance} and
  \begin{equation*}
    d_2 = \frac{\log\left( \frac{S_0}{K} \right) + r - \frac{g(y)^2T}{2}}{g(y) \sqrt{T}}. 
  \end{equation*}
  For log-moneyness $k, k'\in \RR$,  short maturity $T$, and any $0 \le H \le 1/2$, the skew therefore behaves like
  \begin{equation*}
    \frac{\sigma_{BS}(T,k)-\sigma_{BS}(T,k')}{k-k'} \approx - a \rho \log\left(\f{1}{T} \right)^{-p} T^{H-1/2} \epsilon_n,
  \end{equation*}
  with $a$ defined in \eqref{eq:a}.
\end{thm}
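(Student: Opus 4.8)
The plan is to read the statement off by assembling three ingredients already in place: the verification of Hypothesis~\ref{hypothesis 1} carried out in Example~\ref{Fuk Ex}, Fukasawa's implied-volatility expansion (Theorem~\ref{thm:implied VOl}), and the short-time asymptotics of $\rho_{12}$ from Lemma~\ref{lem:rho13-expansion} (itself resting on Lemma~\ref{lem:integrated-covariance} and the expansion~\eqref{eq:exp-int-expansion}). Essentially nothing new is proved, only combined.

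First I would recall from Example~\ref{Fuk Ex} that, with $Y^n_s = y + \epsilon_n\hatW_s$ and $\hatW$ the log-fBm of~\eqref{eq:log-fbm}, Hypothesis~\ref{hypothesis 1} holds with $\Sigma_n = \Sigma = g(y)^2 T$ (so $\sigma_n = \sqrt{\Sigma_n/T} = g(y)$) and $(N_1,N_2)$ jointly Gaussian, that $\delta = \EE[N_1] = 0$, and that $\rho_{12}$ is the quantity~\eqref{eq:rho-13}, explicitly given via Lemma~\ref{lem:integrated-covariance}. Putting $\delta = 0$ and $\sigma_n = g(y)$ into~\eqref{eq:IV} yields the first displayed identity
\[
\sigma_{BS} = g(y)\left(1 - \frac{\rho_{12} d_2}{2}\epsilon_n\right)(1 + o(\epsilon_n)),
\]
and substituting $R(t) = rt$, $\Sigma = g(y)^2T$ into $d_1 = (\log(S_0/K) + r + \Sigma/2)/\sqrt{\Sigma}$ and $d_2 = d_1 - \sqrt{\Sigma}$ gives $d_2 = (\log(S_0/K) + r - g(y)^2T/2)/(g(y)\sqrt{T})$, as claimed.

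For the skew, the key observation is that in the leading-order term the only dependence on the log-moneyness $k = \log(S_0/K)$ enters through $d_2$, which is affine in $k$ with slope $1/(g(y)\sqrt{T})$; the constant-in-$k$ piece $r - g(y)^2T/2$ cancels in the finite difference, so $d_2(k) - d_2(k') = (k-k')/(g(y)\sqrt{T})$ and
\[
\frac{\sigma_{BS}(T,k) - \sigma_{BS}(T,k')}{k - k'} = -\frac{\rho_{12}}{2\sqrt{T}}\,\epsilon_n\,(1 + o(\epsilon_n)).
\]
Inserting the $T \to 0$ asymptotics $\rho_{12} = \frac{g'(y)}{g(y)}\frac{C\zeta^{-p}\rho}{(H+1/2)(H+3/2)}\log(1/T)^{-p}T^H(1+o(1))$ of Lemma~\ref{lem:rho13-expansion}, one gets $-\rho_{12}/(2\sqrt{T}) = -a\rho\log(1/T)^{-p}T^{H-1/2}(1+o(1))$ with $a$ exactly the constant~\eqref{eq:a}; the two branches of~\eqref{eq:a} come from the explicit $C_{H,\zeta,p}$ of Lemma~\ref{lem:var-logfbm}, in particular the value $C_{0,\zeta,p} = [2p/((2p-1)\zeta)]^{-1/2}$ at $H = 0$, so this endpoint is not exceptional. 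Combining the two displays gives the asserted skew behaviour.

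The one delicate point --- and the reason the conclusion is phrased with ``$\approx$'' rather than as a genuine iterated or double limit --- is the order of the two asymptotic regimes. Theorem~\ref{thm:implied VOl} provides an expansion valid as $\epsilon_n \to 0$ for each fixed $T > 0$, whose $o(\epsilon_n)$ remainder need not be uniform in $T$, while Lemma~\ref{lem:rho13-expansion} describes the $T \downarrow 0$ behaviour of the fixed-$T$ coefficient $\rho_{12}$. The precise content is therefore the nested statement: for each small fixed $T$ the vol-of-vol expansion holds, and its leading coefficient is asymptotic to $-a\rho\log(1/T)^{-p}T^{H-1/2}$ as $T \downarrow 0$. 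Upgrading this to a single simultaneous limit would require quantitative control of Fukasawa's error term in $T$, which is exactly what the unconditional estimate of Section~\ref{sec: asymptotic paolo} is designed to provide by a different route.
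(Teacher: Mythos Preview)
Your proposal is correct and follows essentially the same route as the paper: the paper's ``proof'' is the discussion immediately preceding the theorem statement, which likewise combines the verification of Hypothesis~\ref{hypothesis 1} from Example~\ref{Fuk Ex}, the expansion~\eqref{eq:IV} with $\delta=0$ and $\sigma_n=g(y)$, the explicit form of $d_2$, and the short-time asymptotics of $\rho_{12}$ from Lemma~\ref{lem:rho13-expansion}. Your treatment is in fact slightly more explicit than the paper's about the finite-difference step and about the informal meaning of ``$\approx$'' (a point the paper only addresses later, after Corollary~\ref{cor:rbergomi-short-skew}).
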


\subsection{The rough Bergomi model}

As a practical example, we consider here the rough Bergomi model, when the driving noise of the instantaneous variance is given as a log-fBm.
To this end, denote by $\xi(u)=\EE_{\QQ}[v_u|\cF_0]$,  where $v$ denotes the instantaneous variance. 
The rough Bergomi model is given by 
\begin{equation}\label{eq: def of rBergomi dyn}
\begin{aligned}
S_t^n&=S_0\cE\left(\int_0^t \sqrt{v_s^n}\dd B_s\right)
\\
v_t^n&=\xi(t)\cE\left(\epsilon_n \int_0^tK(t-s)\dd W_s\right)
\end{aligned}
\end{equation}
where $B_t=\rho W_t+\sqrt{1-\rho^2}W'_t$. 
\begin{thm}
\label{thr:rbergomi}
Let for $n\in \NN$, let  $(S^n,v^n)$ be a stochastic volatility model  given with rough Bergomi dynamics as  in \eqref{eq: def of rBergomi dyn}, where $\{\epsilon_n\}_{n\in \NN}$ is a null sequence, representing vol-of-vol $\eta$, and where the Volterra kernel $K$ is given as in \eqref{eq:log-kernel}. Then the following expansion holds for the implied volatility surface 
\begin{equation*}
\sigma_{BS}(T,k) = k\, \epsilon_n\, \frac{T^{-\frac{1}{2}}}{2}\left(\int_0^T \xi(s)\dd s\right)^{-\frac{3}{2}}\int_0^T \xi(s)\int_0^sK(s-r)\sqrt{\xi(r)}\dd r\dd s  + o(\epsilon_n), 
\end{equation*}
where $k=\log (S_0/K)$ denotes log-moneyness and $T$ is maturity time. 
Furthermore, the ATM volatility skew behaves like 
\begin{equation*}
     \frac{\sigma_{BS}(T,k)-\sigma_{BS}(T,k')}{k-k'} \approx \frac{\epsilon_n T^{-\frac{1}{2}}}{2}\left(\int_0^T \xi(s)\dd s\right)^{-\frac{3}{2}}\int_0^T \xi(s)\int_0^s K(s-r)\sqrt{\xi(r)}\dd r\dd s.
\end{equation*}
\end{thm}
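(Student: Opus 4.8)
The plan is to fit the log‑modulated rough Bergomi dynamics \eqref{eq: def of rBergomi dyn} into the general form \eqref{Z rep} and apply Fukasawa's expansion (Theorem~\ref{thm:implied VOl}), essentially as in Example~\ref{Fuk Ex} but now carrying along the time‑dependent forward variance curve $\xi$ and the stochastic‑exponential normalization. Writing $B_t=\rho W_t+\sqrt{1-\rho^2}W'_t$ and
\begin{equation*}
  v_s^n=\xi(s)\exp\big(\epsilon_n\hatW_s-\tfrac{\epsilon_n^2}{2}\var(\hatW_s)\big),
\end{equation*}
we have $Z_t^n=\log(S_t^n/S_0)=\int_0^t\sqrt{v_s^n}\,\dd B_s-\tfrac12\int_0^t v_s^n\,\dd s$, so that in the notation of \eqref{Z rep} the martingale part is $M_t^n=\int_0^t\sqrt{v_s^n}\,\dd B_s$, with $X_t^n=\sqrt{1-\rho^2}\int_0^t\sqrt{v_s^n}\,\dd W'_s$, $g_s^n=\rho\sqrt{v_s^n}$, $A_t^n=-\tfrac12\int_0^t v_s^n\,\dd s$ the usual martingale drift, and $R\equiv0$. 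Then $\langle M^n\rangle_T=\int_0^T v_s^n\,\dd s$, and since the stochastic exponential has unit mean, $\EE[v_s^n]=\xi(s)$; this forces the normalization $\Sigma_n\coloneqq\Sigma\coloneqq\int_0^T\xi(s)\,\dd s$, i.e.\ $\sigma_n^2T=\Sigma$.

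The main step is to verify Hypothesis~\ref{hypothesis 1}. Since $p>1$, $\hatW$ has continuous paths on $[0,T]$ and, by Fernique's theorem, $\sup_{u\le T}|\hatW_u|$ has Gaussian tails; combined with $\xi$ continuous and strictly positive on $[0,T]$ (hence bounded away from $0$ and $\infty$), this gives $v_s^n\le C e^{\epsilon_n\sup_{u\le T}|\hatW_u|}$ and $\int_0^T v_s^n\,\dd s\ge c\,e^{-\epsilon_n\sup_{u\le T}|\hatW_u|}$, so that $v^n$ and $(\int_0^T|g_s^n|^2\,\dd s)^{-1}=\rho^{-2}(\int_0^T v_s^n\,\dd s)^{-1}$ are bounded in every $L^p$, uniformly in $n$ (here $\rho\neq0$). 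For $\cD^n=\epsilon_n^{-1}(\Sigma^{-1}\int_0^T v_s^n\,\dd s-1)=\Sigma^{-1}\int_0^T\epsilon_n^{-1}(v_s^n-\xi(s))\,\dd s$, a first‑order Taylor expansion of the exponential, with remainder dominated using the same Gaussian integrability, shows $\cD^n\to N_1\coloneqq\Sigma^{-1}\int_0^T\xi(s)\hatW_s\,\dd s$ in every $L^p$; and since $v_s^n\to\xi(s)$ uniformly a.s.\ with an $L^2$ dominating function, Itô isometry and dominated convergence give $\Sigma_n^{-1/2}M_T^n=\Sigma^{-1/2}\int_0^T\sqrt{v_s^n}\,\dd B_s\to N_2\coloneqq\Sigma^{-1/2}\int_0^T\sqrt{\xi(s)}\,\dd B_s$ in $L^2$. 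Hence $(\cD^n,\Sigma_n^{-1/2}M_T^n)\to(N_1,N_2)$ in probability, and $(N_1,N_2)$ is a Gaussian vector: stochastic Fubini gives $N_1=\Sigma^{-1}\int_0^T\big(\int_r^T\xi(s)K(s-r)\,\dd s\big)\,\dd W_r$, a Wiener integral against $W$, while $N_2$ is a Wiener integral against $B$. So Hypothesis~\ref{hypothesis 1} holds with $(N_1,N_2)$ normal.

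It then remains to compute the two constants in \eqref{eq:IV}. As $\hatW$ is centred, $\delta=\EE[N_1]=0$. For $\rho_{12}=\EE[N_1N_2]$ only the $W$‑component of $B=\rho W+\sqrt{1-\rho^2}W'$ contributes, and Itô isometry gives $\EE[\hatW_s\int_0^T\sqrt{\xi(r)}\,\dd B_r]=\rho\int_0^sK(s-r)\sqrt{\xi(r)}\,\dd r$ for $s\le T$, whence
\begin{equation*}
  \rho_{12}=\frac{\rho}{\Sigma^{3/2}}\int_0^T\xi(s)\int_0^sK(s-r)\sqrt{\xi(r)}\,\dd r\,\dd s .
\end{equation*}
Substituting $\delta=0$ and this $\rho_{12}$ into \eqref{eq:IV}, with $\sigma_n=\sqrt{\Sigma/T}$ and $d_2=(\log(S_0/K)-\Sigma/2)/\sqrt{\Sigma}$, the only contribution to the $O(\epsilon_n)$ correction that depends on the log‑moneyness $k=\log(S_0/K)$ is $-\tfrac{\epsilon_n}{2}\sigma_n\rho_{12}\,k/\sqrt{\Sigma}$; since $\sigma_n/\sqrt{\Sigma}=T^{-1/2}$ this yields the stated expansion of $\sigma_{BS}(T,k)$, and forming the difference quotient $(\sigma_{BS}(T,k)-\sigma_{BS}(T,k'))/(k-k')$ removes the $k$‑independent part and leaves the claimed ATM skew.

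I expect the verification of Hypothesis~\ref{hypothesis 1} to be the only genuine obstacle: the uniform‑in‑$n$ $L^p$ bounds, the identification of the limits $N_1$ and $N_2$, and especially the control of the Taylor remainder of the stochastic exponential in $\cD^n$ all hinge on the Gaussian integrability of $\sup_{u\le T}|\hatW_u|$, which is available here precisely because $p>1$ guarantees continuity (and a Fernique bound) even at $H=0$. Everything else is a routine adaptation of Example~\ref{Fuk Ex}.
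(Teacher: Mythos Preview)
Your proof is correct and follows essentially the same route as the paper: identify $M^n$, $A^n$, set $\Sigma=\int_0^T\xi(s)\,\dd s$, verify Hypothesis~\ref{hypothesis 1}, read off the Gaussian limits $N_1,N_2$, compute $\rho_{12}$, and plug into Theorem~\ref{thm:implied VOl}. The only noteworthy difference is in the moment bound for $\bigl(\int_0^T v_s^n\,\dd s\bigr)^{-1}$: the paper uses Jensen's inequality together with the observation that $(v_s^n)^{-1}$ is again a stochastic exponential, whereas you obtain it (and the uniform-in-$n$ bound on $\cD^n$) via Fernique's theorem for $\sup_{u\le T}|\hatW_u|$; both arguments are valid, and yours has the mild advantage of making the uniformity in $n$ explicit.
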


\begin{proof}
For the proof of this theorem we will apply the martingale expansion of Theorem \ref{thm:implied VOl} to obtain the implied volatility expansion. 
To this end, we need to verify that Hypothesis \ref{hypothesis 1} holds for this particular model. Since $g(y)\sim e^{y}$ is unbounded, we cannot apply Lemma \ref{lem:rho13-expansion} directly, and we need to verify that  the conditions in Hypothesis \ref{hypothesis 1} indeed holds. 
We begin to specify the terms of Fukasawa's expansion.

In this case we have 
\begin{align*}
A^n_t&=-\frac{1}{2}\int_0^t v_s^n\dd s
\\
M_t^n &=\sqrt{1-\rho^2}\int_0^t \sqrt{v_s^n}\dd W'_s+\rho\int_0^t \sqrt{v_s^n}\dd W_s. 
\end{align*}
It is readily seen that $\langle M^n\rangle =- 2A^n$. 
We set $\Sigma_n=\Sigma=\int_0^T\xi(s)\dd s$ and observe that  for each $n\in \NN$
\begin{equation*}
\cD^n =\epsilon^{-1}_n(\Sigma^{-1}\int_0^T \xi(s)\cE\left(\epsilon_n \int_0^sK(s-r)\dd W_r\right)\dd s-1) 
\end{equation*}
is bounded in   $L^p(\Omega)$ for any $p>0$. Furthermore, by Jensen's inequality, it follows that $\left(\int_0^T v_s^n \dd s\right)^{-1}$  is bounded in $L^p(\Omega)$. Indeed, we see that
\begin{equation*}
\EE\left[ \left(\int_0^T v_s^n\dd s\right)^{-p}\right] \leq \int_0^T\EE\left[ (v_s^{n})^{-p}\right]\dd s <\infty,
\end{equation*}
where we have used that $(v_s^n)^{-1}=v_t^n=\xi(t)\cE\left(-\epsilon_n \int_0^tK(t-s)\dd W_s\right)$ which is contained in $L^p(\Omega)$.
 Moreover, we see that $\cD^n$ and  $\Sigma^{-1/2}M_T^n$ converge weakly to the normal random variables $N_1$ and $N_2$. 
In particular, we have that 
\begin{equation*}
N_1=\frac{\int_0^T\xi(s)\int_0^t K(s-r)\dd W_r \dd s}{\Sigma}
\qquad {\rm and} \qquad 
N_2 = \frac{ \int_0^T \sqrt{\xi(s)}\dd B_s  }{\sqrt{\Sigma}}.
\end{equation*}
We can therefore apply Theorem \ref{thm:implied VOl} to the rough Bergomi model. To this end, we need to compute $\rho_{12}=\EE[N_1N_2]$, and we observe that 
\begin{equation*}
\rho_{12}=\rho\Sigma^{-\frac{3}{2}}\int_0^T \xi(s) \int_0^s K(s-r) \sqrt{\xi(r)} \dd r \dd s. 
\end{equation*}
Explicit computations of this term is more difficult, due to the integration over the variance curve. Of course, if $\xi(s)=\xi$ is constant, then $\rho_{12}$ is computed identically as in Lemma \ref{lem:rho13-expansion}. 
 It follows from Theorem \ref{thm:implied VOl}, using that $\sigma_n=\sqrt{\frac{\Sigma}{T}}$,  that the implied volatility is given by 
\begin{equation*}
\sigma_{BS} = \sqrt{\frac{\Sigma}{T}}\left(1-\frac{\epsilon_n}{2}\rho_{12}d_2   \right)+o(\epsilon_n)
\end{equation*}
Inserting the values for $d_2$ and $\rho_{12}$, considering the leading order term involving the log-moneyness $k=\log\left(\frac{S}{K}\right)$, we find that 
\begin{equation*}
\sigma_{BS}(T,k) = k \frac{\epsilon_nT^{-\frac{1}{2}}}{2}\Sigma^{-\frac{3}{2}}\int_0^T \xi(s)\int_0^sK(s-r)\sqrt{\xi(r)}\dd r\dd s + o(\epsilon_n), 
\end{equation*}
where $k=\log \left(\frac{S}{K}\right)$. 
Furthermore, from the above formula, it is straightforward to see that the ATM volatility skew behaves like
\begin{equation*}
   \frac{\epsilon_nT^{-\frac{1}{2}}}{2}\Sigma^{-\frac{3}{2}}\int_0^T \xi(s)\int_0^s K(s-r)\sqrt{\xi(r)}\dd r\dd s. 
\end{equation*}
Substituting $\Sigma=\int_0^T \xi(s)\dd s$, concludes the proof. 
\end{proof}

\begin{cor}
\label{cor:rbergomi-short-skew}
In the super-rough Bergomi model with constant forward variance curve $\xi$, the ATM skew behaves like 
\begin{equation*}
    \frac{\sigma_{BS}(T,k)-\sigma_{BS}(T,k')}{k-k'} \approx  - a \rho \log\left(\f{1}{T} \right)^{-p} T^{H-1/2} \epsilon_n,\qquad {\rm as} \quad T\rightarrow 0,
  \end{equation*}
  with $a$ given in \eqref{eq:a}, substituting $g(x) \equiv \xi(0) \exp(x)$.
\end{cor}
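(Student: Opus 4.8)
The plan is to read the corollary off Theorem~\ref{thr:rbergomi} by specialising to a constant forward variance curve, and then to reuse the exponential-integral asymptotics already carried out in the proof of Lemma~\ref{lem:rho13-expansion}; beyond these two inputs only bookkeeping of constants is required.

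First I would put $\xi(s) \equiv \xi(0)$ into the skew formula of Theorem~\ref{thr:rbergomi}. Then $\left( \int_0^T \xi(s)\,\dd s \right)^{-3/2} = (\xi(0)T)^{-3/2}$ while the integrand satisfies $\xi(s)\sqrt{\xi(r)} = \xi(0)^{3/2}$, so every power of $\xi(0)$ cancels and the ATM skew is asymptotically equal to
\begin{equation*}
  -\,\frac{\epsilon_n \rho}{2}\; T^{-2} \int_0^T \int_0^s K(s-r)\,\dd r\,\dd s,
\end{equation*}
the factor $-\rho$ being the one carried by $\rho_{12}$ in the proof of Theorem~\ref{thr:rbergomi}. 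By Lemma~\ref{lem:integrated-covariance} the double integral equals $\int_0^T \EE\left[ W_T \hatW_s \right]\dd s$, and for $T \le \chi$ --- which holds for all sufficiently small $T$ --- it equals the term $I_1(T)$ of that lemma. Feeding the asymptotic expansion \eqref{eq:exp-int-expansion} of $\Erm_p$ into $I_1(T)$, exactly as in the proof of Lemma~\ref{lem:rho13-expansion}, gives
\begin{equation*}
  \int_0^T \int_0^s K(s-r)\,\dd r\,\dd s = \frac{C\zeta^{-p}}{(H+1/2)(H+3/2)}\, \log\left( \frac{1}{T} \right)^{-p} T^{H+3/2}\,(1 + o(1)), \qquad T \to 0.
\end{equation*}
Combining the last two displays and cancelling $T^{-2}\,T^{H+3/2} = T^{H-1/2}$, the skew behaves like $-\frac{\epsilon_n \rho}{2}\,\frac{C\zeta^{-p}}{(H+1/2)(H+3/2)}\, \log(1/T)^{-p}\, T^{H-1/2}$. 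Comparing with \eqref{eq:a}, the prefactor is exactly $-a\rho$ once $g(x) = \xi(0)\exp(x)$ is substituted, since then $g'(y)/g(y) = 1$ regardless of the value of $\xi(0)$. Both $H > 0$ and $H = 0$ are treated at once: Lemma~\ref{lem:integrated-covariance} and \eqref{eq:exp-int-expansion} are insensitive to the vanishing of $H$ (only $\log(1/T) \to \infty$ is used), and the constant $C = C_{H,\zeta,p}$ of Lemma~\ref{lem:var-logfbm} specialises to the two branches of \eqref{eq:a}.

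The only step that is not a routine computation is the final identification: one must match the parametrisation $v^n_t = \xi(0)\,\cE\left(\epsilon_n \int_0^t K(t-s)\,\dd W_s\right)$ of the super-rough Bergomi model with the volatility function of Example~\ref{Fuk Ex} and Theorem~\ref{thr:skew}, i.e.\ recognise that it corresponds to the choice $g(x) = \xi(0)\exp(x)$ --- equivalently, that the log-variance has unit sensitivity in $\epsilon_n\hatW$ --- so that the super-rough Bergomi skew reproduces the generic small-vol-of-vol asymptotic of Theorem~\ref{thr:skew}. Everything else --- the verification of Hypothesis~\ref{hypothesis 1} for this model, the closed form of $\int_0^T \EE[W_T \hatW_s]\,\dd s$, and the $\Erm_p$ expansion --- is already available from Theorem~\ref{thr:rbergomi}, Lemma~\ref{lem:integrated-covariance} and Lemma~\ref{lem:rho13-expansion}.
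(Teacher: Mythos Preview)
Your proposal is correct and follows exactly the route the paper intends: the corollary is stated without proof, and the natural derivation is precisely to specialise Theorem~\ref{thr:rbergomi} to constant $\xi$ (so the $\xi(0)^{3/2}$ factors cancel) and then invoke the short-time asymptotics already established in Lemma~\ref{lem:integrated-covariance} and Lemma~\ref{lem:rho13-expansion}, finally identifying $g'(y)/g(y)=1$ via $g(x)=\xi(0)\exp(x)$. Your bookkeeping of the $-\rho$ factor (which is carried by $\rho_{12}$ in the proof of Theorem~\ref{thr:rbergomi} but is omitted from its displayed statement) is also correct.
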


In the formula above as well as in Theorem \ref{thr:rbergomi}, we write ``$\approx$'' in an informal way, meaning that we compute an approximation of the finite difference corresponding to the skew using our small vol of vol expansion. Then, ``$\approx$'' stands for the behavior as $T\to 0$ of this quantity.

Corollary~\ref{cor:rbergomi-short-skew} provides a financial interpretation of the parameter $p$ as a log-modulation of the power-law behaviour of the ATM-skew for short maturities. We hesitate to provide any financial interpretation to the second parameter $\zeta$ of the log-modulated fBm. This is in line with Remark~\ref{rem:parameter-choice}, which recommends fixing $\zeta$ a priori rather than calibrating it to financial data.

\section{Asymptotic skew under log-fractional volatility}\label{sec: asymptotic paolo}

In Section \ref{sec:skew-expansion} we show an expansion for the implied skew in small time and small vol-of-vol using the martingale expansion outlined in Section \ref{sec:Fukasawas method}. We attempt here to understand the short time behavior of a log-modulated rough stochastic volatility model without considering the small vol-of-vol regime, but just the short time asymptotics. For this, we adapt Fukasawa's framework of \cite{fukasawa2017,euch2019short, fukasawa2020} to log-fractional volatility, using the ``regular variation'' language.
Let us consider, similarly to \eqref{eq: def of rBergomi dyn}, a stochastic volatility model of the form
\begin{equation*}
\begin{split}
 &
 S_t = 
 S_0\, \mathcal{E}\left(\int_0^t\sqrt{v_s}
\dd B_s\right), 
\\&v_t = 
\xi(t) \,\mathcal{E}
\left(
\eta\int_0^t K(t,s) \dd W_s 
\right)
\end{split}
\end{equation*}
where  $B_s=\rho  W_s
+ \sqrt{1-\rho^2}  W'_s$, $\eta>0$. For now, we do not assume the specific form \eqref{eq:log-fbm} for $K$, but only square integrability,  in the sense that $K(t,\cdot)\in L^2([0,t])$ for all $t\in [0,T]$. 
We also assume
that the process $\int_0^t K(t,s) \dd W_s$  is continous (as already seen, this is satisfied for $K$ in \eqref{eq:RL-kernel}; see also \cite[Proposition 2.4]{gassiat} for a general continuity condition for convolution kernels) and that $\rho \in [-1,0]$. This implies that the price process $S$ is a martingale, as shown in 
\cite[Theorem 1.1]{gassiat}.
Let $s(t):=\left(\int_0^t K(t,s)^2 \dd s \right)^{1/2}$. To allow logarithmic corrections to the fractional power-law  type kernels we assume $s(t)\to 0$ as $t \to 0$ and $s$ to be regularly varying  at $0$: for some $L$ slowly varying,
\[
s(t)=t^H L(t)
\] 
(so $L(t)\to 0$ if $H=0$).
We also assume $H\in[0,1/2)$ (rough but also super-rough volatility).
Let
\begin{equation*}
\begin{split}
& \bar{\xi}(t) \coloneqq \frac{1}{t} \int_0^t \xi(u) \dd u, 
\\
& \cK
\coloneqq
\lim_{t\to 0}
\frac{\int_0^t K(t,s) \dd s}{\sqrt{t\int_0^t K^2(t,s) \dd s}}
\\
& \alpha(z) \coloneqq z \frac{\rho \eta \cK}{\sqrt{v_0}(2H+3)} 
\end{split}
\end{equation*}
where $v_0$ is spot volatility and $\xi(\cdot)$ is continuous at $0$. The following theorem and corollary are inspired by \cite[Theorem 2.1 and Corollary 2.1]{fukasawa2020}, modified in order to be applicable to log-fractional volatility.
\begin{thm}
Denoting $\sigma_{{BS}}(k,T)$ the Black-Scholes
 implied volatility at time $0$ with expiry $T$ and log-moneyness $k$. For $z\in \RR$ and $T \to 0$,
\begin{equation*}
 \sigma_{{BS}}\left(z\sqrt{T},T\right) = 
\sqrt{\bar{\xi}(T)} \big(1 + \alpha(z){s}(T)\big)
+  o({s}(T))
\end{equation*}
\end{thm}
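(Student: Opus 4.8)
The plan is to adapt the short-time Edgeworth expansion of \cite[Theorem~2.1]{fukasawa2020} (see also \cite{euch2019short}) to the log-fractional setting, the one genuinely new ingredient being that the self-similarity exploited there for a pure power-law kernel must be replaced, at the decisive step, by Karamata's theorem for the regularly varying normalization $s(t)=t^HL(t)$. Throughout $T$ is the small parameter, the correction having size $s(T)\to0$; write $\Xi_T\coloneqq\int_0^Tv_s\,\dd s$ for the realized variance. Since $\rho\le0$ and $K(t,\cdot)\in L^2([0,t])$, the price $S$ is a true martingale by \cite[Theorem~1.1]{gassiat}, so European option prices are given by the Hull--White/Romano--Touzi mixing formula; moreover $\EE[\Xi_T]=\int_0^T\xi(u)\,\dd u=T\bar{\xi}(T)$ \emph{exactly}, because $v_u=\xi(u)\cE\bigl(\eta\int_0^uK(u,r)\,\dd W_r\bigr)$ has expectation $\xi(u)$ by the Gaussian moment generating function. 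This identity, together with the fact that the next correction to the at-the-money level is a convexity term of order $s(T)^2$, already explains the absence of an $O(s(T))$ term at $z=0$, i.e.\ that $\alpha(0)=0$.

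Next I would carry out the first-order expansion. Conditionally on the $\sigma$-field generated by $W$ on $[0,T]$, the log-price $X_T\coloneqq\log(S_T/S_0)$ is Gaussian with mean $\rho\int_0^T\sqrt{v_s}\,\dd W_s-\tfrac12\Xi_T$ and variance $(1-\rho^2)\Xi_T$, so every European price is a Black--Scholes price with random effective spot $S_0\cE\bigl(\rho\int_0^T\sqrt{v_s}\,\dd W_s-\tfrac{\rho^2}{2}\Xi_T\bigr)$ and random effective total variance $(1-\rho^2)\Xi_T$, averaged over $W$. Expanding $v_s=\xi(s)\cE\bigl(\eta\int_0^sK(s,r)\,\dd W_r\bigr)=\xi(s)\bigl(1+\eta\int_0^sK(s,r)\,\dd W_r+O(s(T)^2)\bigr)$ uniformly for $s\le T$, the leading fluctuations are $\Xi_T-T\bar{\xi}(T)\approx\eta\int_0^T\xi(s)\bigl(\int_0^sK(s,r)\,\dd W_r\bigr)\dd s$ and $\rho\int_0^T\sqrt{v_s}\,\dd W_s\approx\rho\int_0^T\sqrt{\xi(s)}\,\dd W_s$, both centred. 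Taylor-expanding the Black--Scholes price around the deterministic point $\bigl(S_0\cE(-\tfrac{\rho^2}{2}T\bar{\xi}(T)),\,(1-\rho^2)T\bar{\xi}(T)\bigr)$ and extracting the coefficient of $z$ in the induced implied-volatility correction, the routine Black--Scholes bookkeeping (identical to the small-vol-of-vol computation behind Theorem~\ref{thr:rbergomi} and to \cite{fukasawa2020}) should reduce the entire $z$-linear term to the single covariance
\begin{multline*}
\EE\Bigl[\bigl(\rho\!\int_0^T\!\sqrt{\xi(s)}\,\dd W_s\bigr)\bigl(\eta\!\int_0^T\!\xi(s)\!\int_0^s\!K(s,r)\,\dd W_r\,\dd s\bigr)\Bigr]\\
=\rho\eta\int_0^T\xi(s)\int_0^sK(s,r)\sqrt{\xi(r)}\,\dd r\,\dd s,
\end{multline*}
by It\^o's isometry and stochastic Fubini. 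One then verifies that every other contribution --- the diagonal Black--Scholes second derivatives, the $O(s(T)^2)$ term of $v_s$, and the higher cumulants of $X_T$ --- is $o(s(T))$ uniformly over bounded $z$, as in \cite{euch2019short,fukasawa2020}, from $L^p$-boundedness (all $p>0$) of $\Xi_T/(T\bar{\xi}(T))$ and of its reciprocal --- the local counterpart of Hypothesis~\ref{hypothesis 1} --- which in turn follows from the true-martingale property and $K(t,\cdot)\in L^2$.

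Finally I would evaluate the covariance asymptotically. Since $\xi$ is continuous at $0$, $\sup_{r\le T}|\xi(r)-v_0|\to0$, and $\int_0^s|K(s,r)|\,\dd r\le\bigl(s\int_0^sK(s,r)^2\,\dd r\bigr)^{1/2}$ by Cauchy--Schwarz, so replacing $\xi(s)$ by $v_0$ and $\sqrt{\xi(r)}$ by $\sqrt{v_0}$ inside the double integral perturbs the integrand only by $o(1)\cdot\bigl(s\int_0^sK(s,r)^2\,\dd r\bigr)^{1/2}$; hence the covariance integral is $\sim v_0^{3/2}\int_0^T\bigl(\int_0^sK(s,r)\,\dd r\bigr)\dd s$. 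By the definition of $\cK$, $\int_0^sK(s,r)\,\dd r=(\cK+o(1))\bigl(s\int_0^sK(s,r)^2\,\dd r\bigr)^{1/2}=(\cK+o(1))\,s^{H+1/2}L(s)$, and since $s\mapsto s^{H+1/2}L(s)$ is regularly varying at $0$ of index $H+1/2>-1$, Karamata's theorem gives
\begin{equation*}
\int_0^T\xi(s)\int_0^sK(s,r)\sqrt{\xi(r)}\,\dd r\,\dd s\ \sim\ v_0^{3/2}\,\cK\,\f{T^{H+3/2}L(T)}{H+3/2}\ =\ \f{2\,v_0^{3/2}\,\cK}{2H+3}\,T^{3/2}s(T).
\end{equation*}
Inserting this into the first-order expansion, using $\bar{\xi}(T)\to v_0$ and the rescaling $k=z\sqrt T$ --- the factor $H+3/2$ from Karamata combining with the $\tfrac12$ of the Black--Scholes expansion into the $2H+3$ of $\alpha$ --- yields $\sigma_{BS}(z\sqrt T,T)=\sqrt{\bar{\xi}(T)}\bigl(1+\alpha(z)s(T)\bigr)+o(s(T))$, which is the asserted formula.

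The main obstacle is the uniform control invoked in the second step: making the termwise Taylor expansion of the mixing formula rigorous and bounding each discarded term by $o(s(T))$ uniformly over the moneyness window $k=z\sqrt T$. This is exactly where the moment estimates for $\Xi_T$ and its reciprocal must be established carefully in the present generality of an arbitrary square-integrable, regularly varying kernel, and where one really has to redo, rather than merely cite, the arguments of \cite{euch2019short,fukasawa2020}. By comparison the regular-variation input is soft; its only subtlety is that $K(t,\cdot)$ need not be self-similar, so one has to combine the regular variation of $t\mapsto s(t)$ with the separate assumption that the limit $\cK$ exists, in place of a scaling symmetry.
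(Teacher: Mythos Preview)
Your route is genuinely different from the paper's. The paper does \emph{not} use the Hull--White/Romano--Touzi mixing formula or a Taylor expansion of the Black--Scholes price. Instead it follows \cite{fukasawa2017,fukasawa2020} directly: it rescales the price to the martingale $X^t_u=\tfrac{1}{\sqrt t}\bigl(\tfrac{S_{ut}}{S_0}-1\bigr)$, applies It\^o's formula to the explicit solution $p(x,u)$ of the \emph{Bachelier} pricing PDE, and reduces the rescaled put price to $p(0,0)$ plus an integral of $\tfrac12\partial^2_x p(X^t_u,u)\,(v_{ut}-\xi(ut))$ over $u\in[0,1]$, the other It\^o remainder being $O(\sqrt t)$. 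The regular-variation input enters at a different place than in your argument: the paper uses the pointwise ratio $s(ut)/s(t)\to u^H$ (the uniform convergence theorem for regularly varying functions) inside the $du$-integral, together with a martingale CLT giving the joint law of $(X^t_u,\,(v_{ut}-\xi(ut))/s(ut))$, to pass to the limit and obtain the factor $\int_0^1 u^H\,(\cdots)\,du$ that produces the $1/(H+3/2)$. You instead compute the covariance $\int_0^T\xi(s)\int_0^sK(s,r)\sqrt{\xi(r)}\,dr\,ds$ and evaluate it by Karamata; this is a correct alternative use of regular variation and lands on the same constant.

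What each approach buys: the paper's Bachelier/It\^o route keeps the error analysis inside a single PDE remainder and a single CLT, so the ``uniform control'' you flag as the main obstacle is essentially already packaged in the cited framework; by contrast your mixing-formula route makes the mechanism (skew $=$ spot--variance covariance) more transparent and connects cleanly to the small-vol-of-vol formula of Theorem~\ref{thr:rbergomi}, but, as you acknowledge, turning the termwise Taylor expansion of the conditional Black--Scholes price into a rigorous $o(s(T))$ statement uniformly over $k=z\sqrt T$ requires redoing moment and remainder estimates that the paper simply inherits from \cite{fukasawa2020}. Your identification of the key covariance, the replacement of self-similarity by regular variation, and the final asymptotic are all correct; the gap you name is real but fillable, and is precisely what the paper's choice of method sidesteps.
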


\begin{cor}\label{cor:skew:smalltime}
The implied skew behaves as follows: for $z' \neq z$, if $\cK\neq 0$,
\begin{equation*}
\begin{split} 
 \frac{ \sigma_{{BS}}\left(z\sqrt{T},T\right) -
 \sigma_{{BS}}\left(z' \sqrt{T},T\right)}
{z\sqrt{T} - z' \sqrt{T} }
&\sim 
\frac{\rho \eta \cK}{2H+3} 
\frac{ {s}(T) }{\sqrt{ T}}
,
\end{split}
\end{equation*}
where $\sim$ denotes asymptotic equivalence as $T\to 0$. If $\cK=0$,
\begin{equation*}
\begin{split} 
\frac{\sqrt{ T}}{ {s}(T) } \frac{ \sigma_{{BS}}\left(z\sqrt{T},T\right) -
 \sigma_{{BS}}\left(z' \sqrt{T},T\right)}
{z\sqrt{T} - z' \sqrt{T} }
&\to 0.
\end{split}
\end{equation*}
\end{cor}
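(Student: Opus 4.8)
The plan is to read the skew off directly from the expansion in the theorem above, which already carries all the analytic content; the corollary is then a short computation. First I would apply the theorem twice, once with log-moneyness $z\sqrt{T}$ and once with $z'\sqrt{T}$, and subtract. Since the theorem gives $\sigma_{BS}(z\sqrt{T},T)=\sqrt{\bar{\xi}(T)}\bigl(1+\alpha(z)s(T)\bigr)+o(s(T))$ and likewise for $z'$, with $z,z'$ fixed, the two $\sqrt{\bar{\xi}(T)}\cdot 1$ terms cancel exactly and the two error terms combine into a single $o(s(T))$, leaving
\[
\sigma_{BS}\left(z\sqrt{T},T\right)-\sigma_{BS}\left(z'\sqrt{T},T\right)=\sqrt{\bar{\xi}(T)}\,\bigl(\alpha(z)-\alpha(z')\bigr)\,s(T)+o(s(T)).
\]
From the definition $\alpha(z)=z\,\rho\eta\cK/\bigl(\sqrt{v_0}(2H+3)\bigr)$ one has $\alpha(z)-\alpha(z')=(z-z')\,\rho\eta\cK/\bigl(\sqrt{v_0}(2H+3)\bigr)$.

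Next I would divide by $z\sqrt{T}-z'\sqrt{T}=(z-z')\sqrt{T}$, which is a fixed nonzero multiple of $\sqrt{T}$, obtaining
\[
\frac{\sigma_{BS}\left(z\sqrt{T},T\right)-\sigma_{BS}\left(z'\sqrt{T},T\right)}{z\sqrt{T}-z'\sqrt{T}}=\frac{\sqrt{\bar{\xi}(T)}}{\sqrt{v_0}}\,\frac{\rho\eta\cK}{2H+3}\,\frac{s(T)}{\sqrt{T}}+\frac{o(s(T))}{(z-z')\sqrt{T}}.
\]
Since $\xi(0)=v_0$ (from the model at $t=0$) and $\xi$ is continuous at $0$, the average $\bar{\xi}(T)=\frac1T\int_0^T\xi(u)\,\dd u$ tends to $v_0$, so the prefactor $\sqrt{\bar{\xi}(T)}/\sqrt{v_0}\to 1$; and because $z-z'$ is a fixed nonzero constant, the remainder is $o\bigl(s(T)/\sqrt{T}\bigr)$.

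It remains to split into cases. If $\cK\neq 0$ (and $\rho\neq 0$, so that the coefficient $\rho\eta\cK$ is nonzero), the first term on the right is of exact order $s(T)/\sqrt{T}$ and absorbs the $o\bigl(s(T)/\sqrt{T}\bigr)$ remainder, giving the claimed equivalence with $\frac{\rho\eta\cK}{2H+3}\,\frac{s(T)}{\sqrt{T}}$. If $\cK=0$, then $\alpha\equiv 0$, the leading term vanishes, and the whole quotient is $o\bigl(s(T)/\sqrt{T}\bigr)$, i.e. $\frac{\sqrt{T}}{s(T)}$ times the skew tends to $0$. As this is an immediate corollary of the theorem, I expect no genuine obstacle; the only points requiring routine care are the exact cancellation of the two $\sqrt{\bar{\xi}(T)}$ terms, the replacement of $\bar{\xi}(T)$ by its limit $v_0$ via continuity of $\xi$, and the bookkeeping that dividing the $o(s(T))$ error by the fixed constant $(z-z')$ preserves it as $o\bigl(s(T)/\sqrt{T}\bigr)$.
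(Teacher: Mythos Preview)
Your proposal is correct and is exactly the intended derivation: the paper does not give a separate proof of the corollary, treating it as an immediate consequence of the theorem's expansion, and your computation---subtracting the two expansions, using the linearity of $\alpha$, dividing by $(z-z')\sqrt{T}$, and replacing $\bar\xi(T)$ by $v_0$ via continuity---is precisely how one reads it off. Your parenthetical remark that the asymptotic equivalence in the case $\cK\neq 0$ tacitly requires $\rho\neq 0$ is a fair observation about the statement itself.
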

\begin{rem}
It always holds $\cK\leq 1$.
This corollary gives the exact scaling of the implied skew if $\cK\neq 0$, otherwise just gives an upper bound. 
\end{rem}

\begin{proof}
This proof is based on \cite[Appendix A]{fukasawa2020}, \cite[Theorem 1]{fukasawa2017}. Combining 
the martingale CLT (see e.g. \cite[Chapter VIII]{JacodShiryaev}), localization arguments and explicit computations with log-normal random variables, we get
 \begin{equation*}
\left(\frac{1}{\sqrt{t}}  \left(\frac{S_t}{S_0}-1\right)
-
\frac{\sqrt{v_0}B_t}{\sqrt{t}}
,
\frac{1}{\eta s(t)}
\left(
\frac{v_t}{\xi(t)} -1 \right)
-
\frac{\int_0^t K(t,s)\dd W_s}{ s(t)}
\right)
\to
(0,0)
 \end{equation*}
in law as
 $t \to 0$.
Therefore
\begin{equation*}
\left(\frac{1}{\sqrt{t}}  \left(\frac{S_t}{S_0}-1\right),
\frac{1}{\eta s(t)}
\left(
\frac{v_t}{\xi(t)} -1 \right)
\right)
\to
(\gamma,\delta)
 \end{equation*}
in law as
 $t \to 0$,
where $(\gamma,\delta)$
is a centred 2-dim Gaussian with covariance
\begin{equation}\label{eq:gaussian_cov}
\Sigma=\begin{pmatrix}
v_0 & \rho \sqrt{v_0} 
\cK
 \\
 \rho \sqrt{v_0} 
\cK & 1
\end{pmatrix}.
\end{equation}
For $t>0,u \in [0,1]$ let us write
\begin{equation*}
 X_u^t = \frac{1}{\sqrt{t}} 
\left(\frac{S_{ut}}{S_0}-1 \right)
\end{equation*}
From the fact that $S$ is a martingale, it follows that $X^t$ is a martingale in $u$ for fixed $t$, with quadratic variation
\begin{equation*}
  d \langle X^t \rangle_u = (S_{tu}/S_0)^2 v_{ut}
   \dd u = 
(1 + \sqrt{t} X_u^t)^2 v_{ut}  \dd u.
\end{equation*}
We write $\Delta = (e^{z
  \sqrt{t}} -1)/\sqrt{t}$.
We use the Bachelier pricing equation as in \cite{fukasawa2017,fukasawa2020},
\begin{equation*}
 \frac{\partial p}{\partial u} (x,u)
+ \frac{1}{2}\xi(ut) \frac{\partial^2 p}{\partial x^2}(x,u) = 0,\quad\quad
p(x,1) = (\Delta - x)^+
\end{equation*}
whose explicit solution is given by
\begin{equation}\label{p:explicit}
p(x,u) = (\Delta-x) \Phi\left(\frac{t(\Delta-x)}{\int_{ut}^t \xi(s) \dd s}\right)
+
\phi\left(\frac{t(\Delta-x)}{\int_{ut}^t \xi(s) \dd s}\right)
\,\frac{1}{t}
\int_{ut}^t \xi(s) \dd s
\end{equation}
with $\Phi,\phi$ standard normal distribution function and density.
By It\^{o}'s formula we rewrite the following rescaled put option price in terms of $X^t$
\begin{equation}\label{eq:exp}
\begin{split}
\frac{E[(S_0e^{z \sqrt{t}}-S_t)^+]}{S_0
  \sqrt{t}}
 &=  E[(\Delta-X^t_1)^+] = E[p(X^t_1,1)] \\
&= p(0,0) +
\frac{1}{2}
E\left[\int_0^1
\frac{\partial^2 p}{\partial x^2}(X_u^t,u)
(
v_{ut} - \xi(ut))
 \dd u\right]
\\
&+
\frac{1}{2}
E\left[\int_0^1
\frac{\partial^2 p}{\partial x^2}(X_u^t,u)
(
2\sqrt{t} X_u^t+ {t} (X_u^t)^2)v_{ut}
 \dd u\right]
\end{split}
\end{equation}
Since (as in \cite{fukasawa2020})
\begin{equation*}
 \begin{split}
\int_0^1E\left[\frac{\partial^2 p}{\partial x^2}(X_u^t,u)
X_u^t v_{ut}\right] 
 \dd u 
 \to v_0
\frac{z\sqrt{v_0}}{2}\phi\left(\frac{z}{\sqrt{v_0}}\right)
 \end{split}
\end{equation*}
we have
\[
E\left[\int_0^1
\frac{\partial^2 p}{\partial x^2}(X_u^t,u)
(
2\sqrt{t} X_u^t+ {t} (X_u^t)^2)v_{ut}
 \dd u\right] = O(\sqrt{t})
\]
so this term is negligible in \eqref{eq:exp}. Now we use our different (possibly logarithmic) scaling assumption for the volatility and get
\begin{equation*}
 \left(X_u^t,   
\frac{v_{tu} - \xi(tu)}{\eta s(tu)}
\right)
\xrightarrow{t\to 0} (\sqrt{u}\gamma, v_0\delta).
\end{equation*}
Again as in \cite{fukasawa2020}, as $t \to 0$ we have
\begin{equation*}
 \frac{\partial^2 p}{\partial x^2}(X_u^t,u)
\xrightarrow{t\to 0} 
\frac{1}{\sqrt{v_0(1-u)}}\phi\left(
\frac{z-\sqrt{u}\gamma}{\sqrt{v_0(1-u)}}\right)
\end{equation*}
in law for each $ u \in [0,1)$. We have
\begin{equation*}
\frac{1}{s(t)} \int_0^1E\left[\frac{\partial^2 p}{\partial
  x^2}(X_u^t,u)
(v_{ut} - \xi(ut))\right]
 \dd u 
  =
\int_0^1
\frac{s( ut)}{s(t)  u^H
}
\frac{u^H}{s( ut)}
E\left[\frac{\partial^2 p}{\partial
  x^2}(X_u^t,u)
(v_{ut} - \xi(ut))\right]
 \dd u.
\end{equation*}
Regular variation of $s(\cdot)$ implies $s( ut) \sim u^H s(t)  $ as $t\to 0$.
So, 
\begin{equation*}
 \begin{split}
  & 
  \lim_t
  \frac{1}{
s(t)} \int_0^1E\left[\frac{\partial^2 p}{\partial
  x^2}(X_u^t,u)
(v_{ut} - \xi(ut))\right]
 \dd u \\
 & =   \lim_t
\int_0^1
\frac{u^H}{s( ut)}
E\left[\frac{\partial^2 p}{\partial
  x^2}(X_u^t,u)
(v_{ut} - \xi(ut))\right]
 \dd u \\
&  
=\int_0^1 
E\left[
\frac{u^H \eta v_0 \delta}{\sqrt{v_0(1-u)}}\phi\left(
\frac{z-\sqrt{u}\gamma}{\sqrt{v_0(1-u)}}\right)
\right] \dd u.
\end{split}
\end{equation*}
The joint (Gaussian) density of $\gamma$ and $\delta$ is given in \eqref{eq:gaussian_cov}. Explicit computations give
\[
  \lim_t
  \frac{1}{
s(t)} \int_0^1 E\left[\frac{\partial^2 p}{\partial
  x^2}(X_u^t,u)
(v_{ut} - \xi(ut))\right]
 \dd u =
 \frac{
z\rho\eta \cK}{H+3/2}
\phi\left(\frac{z}{\sqrt{v_0}}\right).
\]
Now, from the definition of $\alpha$ and \eqref{p:explicit}, we write the rescaled put option with expiry $t$ as
\begin{equation} \label{eq:exp_put}
 \begin{split}  \frac{E[(S_0e^{z \sqrt{t}}-S_t)^+]}{S_0
  \sqrt{t}}  & = p(0,0) +  
\alpha(z)\sqrt{v_0}\phi\left(\frac{z}{\sqrt{v_0}}\right)
 {s}(t)  + o({s}(t)) \\
&=
\Delta \Phi\left(\frac{\Delta}{\sqrt{\bar{\xi}(t)}}\right)
+ \sqrt{\bar{\xi}(t)}
\phi\left(\frac{\Delta}{\sqrt{\bar{\xi}(t)}}\right)
\left(1 + \alpha(z) {s}(t)
\right)
  + o({s}(t)).
 \end{split}
\end{equation}
Let $p_{{BS}}(K,t,\sigma)$ denote the price  under the Black-Scholes model of a put option with strike $K$, expiry $t$ and volatility $\sigma$. We have the following Taylor expansion, holding for fixed $a$, analogous to \cite[Equation (6)]{fukasawa2020}
\[
 \frac{p_{{BS}}(S_0e^{z\sqrt{t}},t, \sigma +
  a {s}(t) )}{S_0 \sqrt{t}}
=
\Delta \Phi\left(\frac{\Delta}{\sigma}\right)
+ \sigma
\phi\left(\frac{\Delta}{\sigma}\right)
\left(1 
+ \frac{a}{\sigma} 
{s}(t) \right)
  + o({s}(t)).
\]
We have equality with \eqref{eq:exp_put} with
\[
\sigma = \sqrt{\bar{\xi}(t)}, \quad
 a =  \alpha(z) \sqrt{\bar{\xi}(t)},
\]
and the implied volatility expansion follows taking $T=t$ (at least formally).
\end{proof}

\subsection{Asymptotic skew of the (super) rough Bergomi model}
We consider now the model
with $K$ given in  \eqref{eq:log-fbm} and \eqref{eq:log-kernel}. We have, using Lemma \ref{lem:exp-integral}, Lemma \ref{lem:var-logfbm} and \eqref{eq:asym:Ep1}
\[
\begin{split}
\int_0^{t}K(t,s)^2 d s &\sim
\begin{cases}
\frac{C^2 \zeta^{-2p}}{2p-1} \log(1/t)^{1-2p}
\mbox{ for } H=0
\\
\frac{C^2 \zeta^{-2p}}{2H} \log(1/t)^{-2p} t^{2H}
\mbox{ for } H>0
\end{cases} \\
\int_0^{t}K(t,s) d s
&\sim
\frac{C \zeta^{-p}}{H+1/2} \log(1/t)^{-p} t^{H+1/2} \mbox{ for } H\geq 0
\end{split}
\]
We get $\cK=\sqrt{2H}/(H+1/2)$ for $H\in[0,1/2)$.
So, writing ``skew''
in the sense of Corollary \ref{cor:skew:smalltime},
\[
 \text{skew}  \sim
\frac{\rho \eta C\zeta^{-p}}{(2H+3)(H+1/2)}T^{H-1/2} \log(1/T)^{-p},
\]
for $H>0$, and we recover the analogous result to \eqref{eq:skew-asymptotic} and Theorem \ref{thr:rbergomi}.
For $H=0$, we can say
\begin{equation*}
T^{1/2}
\log(1/T)^{p-1/2}
 \text{skew}   \to 0,
\end{equation*}
which gives an upper bound, but we do not get the precise time-scaling of the skew. However, this upper bound is consistent with the small vol-of-vol result \eqref{eq:skew-asymptotic}, even for $H=0$. Moreover, from Figure \ref{fig:skew-surface-sBergomi}, it seems reasonable to expect that the same asymptotics should hold for the skew at $H=0$. The question remains open, whether it is possible to obtain a precise short-time asymptotic result without using a small vol-of-vol expansion.

As a sanity check, note that when $K$ is the classical Riemann-Liouville kernel we recover the well known constant in the explosion of the skew, see e.g. \cite{BFGHS19,fukasawa2017}.

\section{Numerical analysis}\label{sec:numerics}

\begin{figure}[htbp!]
    \centering
    \begin{subfigure}[t]{0.48\textwidth}
        \centering
        \includegraphics[width=\textwidth]{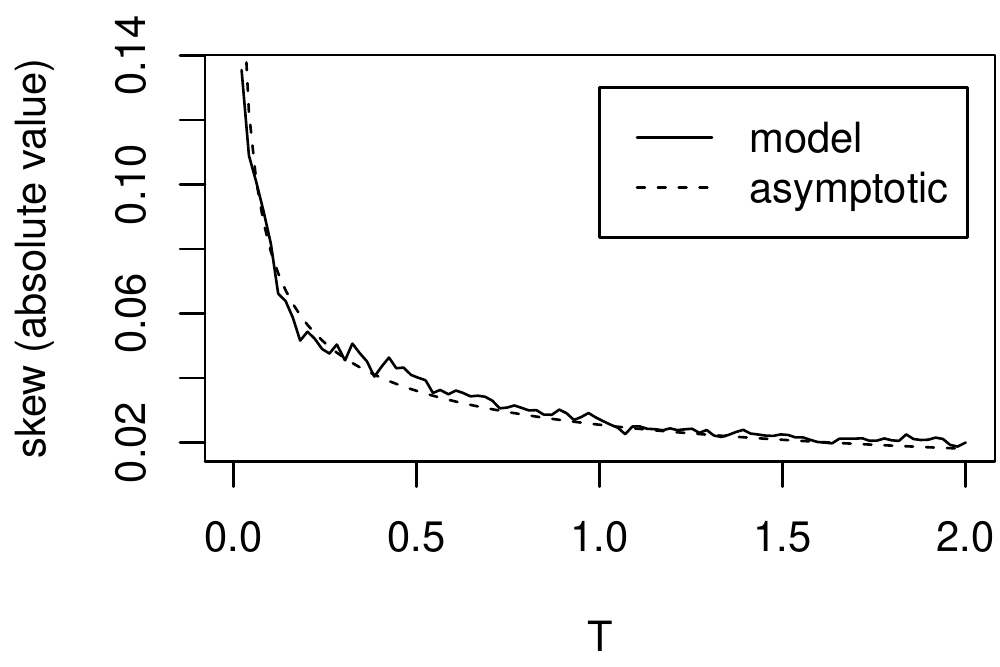}
        \caption{$\eta = 0.2$}
        \label{fig:skews_asymp_small-eta_H0_exact}
    \end{subfigure}%
    \hfill
    \begin{subfigure}[t]{0.48\textwidth}
        \centering
        \includegraphics[width=\textwidth]{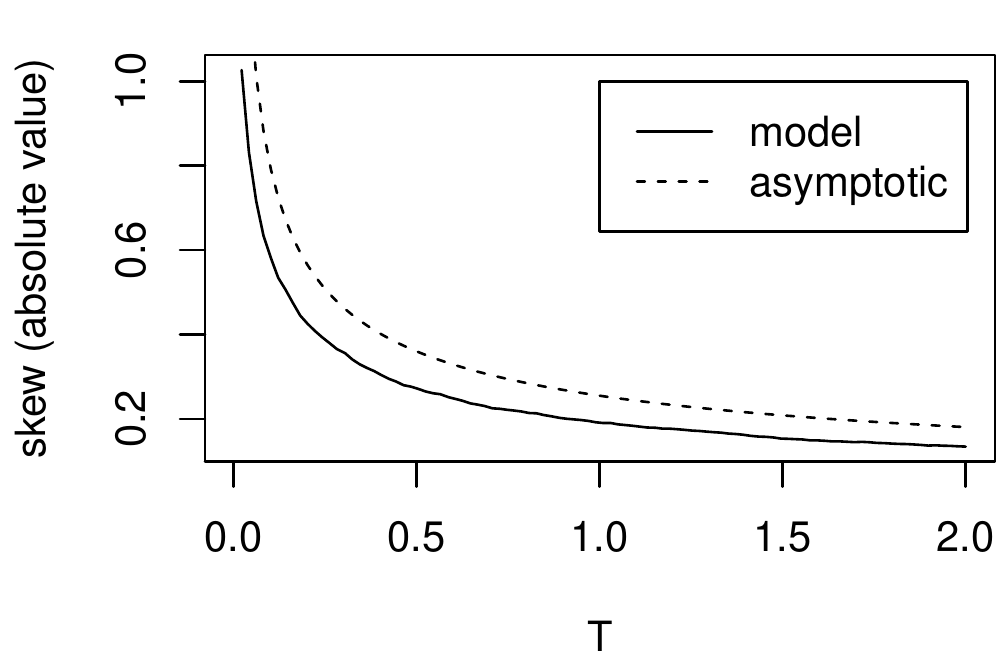}
        \caption{$\eta = 2$}
        \label{fig:skews_asymp_normal-eta_H0_exact}
    \end{subfigure}
    \caption{Asymptotic formula for the ATM-skew for small vol-of-vol in the super-rough Bergomi model with $H = 0$, $\rho = -0.7$, $\zeta = 0.1$, $p = 2$, and small vol-of-vol $\eta = 0.2$ vs ``normal'' vol-of-vol $\eta = 2$. The asymptotic formula is compared against skews computed by Monte Carlo simulation.}
    \label{fig:fig:skew_asymptotic_eta_small_H0}
\end{figure}

We supplement the theoretical results by some numerical experiments. In all these examples, we use the super-rough Bergomi model~\eqref{eq: def of rBergomi dyn}. Skews are computed based on Monte Carlo simulation with exact simulation of the log-fractional Brownian motion~\eqref{eq:log-fbm} together with~\eqref{eq:log-kernel}. More precisely, we compute the covariance function of $(W,\hatW)$ using the formulas in Section~\ref{sec:moments-log-fract} as well as numerical integration for the auto-covariance of $\hatW$. Exact simulation from $(W,\hatW)$ is then done by the Cholesky method. Given samples from the stochastic variance, the asset price process is computed by Euler discretization. We start by comparing the small vol-of-vol expansion with the skews obtained in the model, see Theorem~\ref{thr:rbergomi}. 

In Figure~\ref{fig:fig:skew_asymptotic_eta_small_H0}, we compare the asymptotic formula with the actual skew for two different values of the vol-of-vol parameter $\eta$. Clearly, for small $\eta$ (left), the accuracy is extremely good, and the fit deteriorates noticeably when $\eta$ is increased. Note that we concentrate on the case $H = 0$, as here the behaviour obviously differs most from the rough Bergomi case.
\begin{figure}[H]
    \centering
    \includegraphics[scale=1]{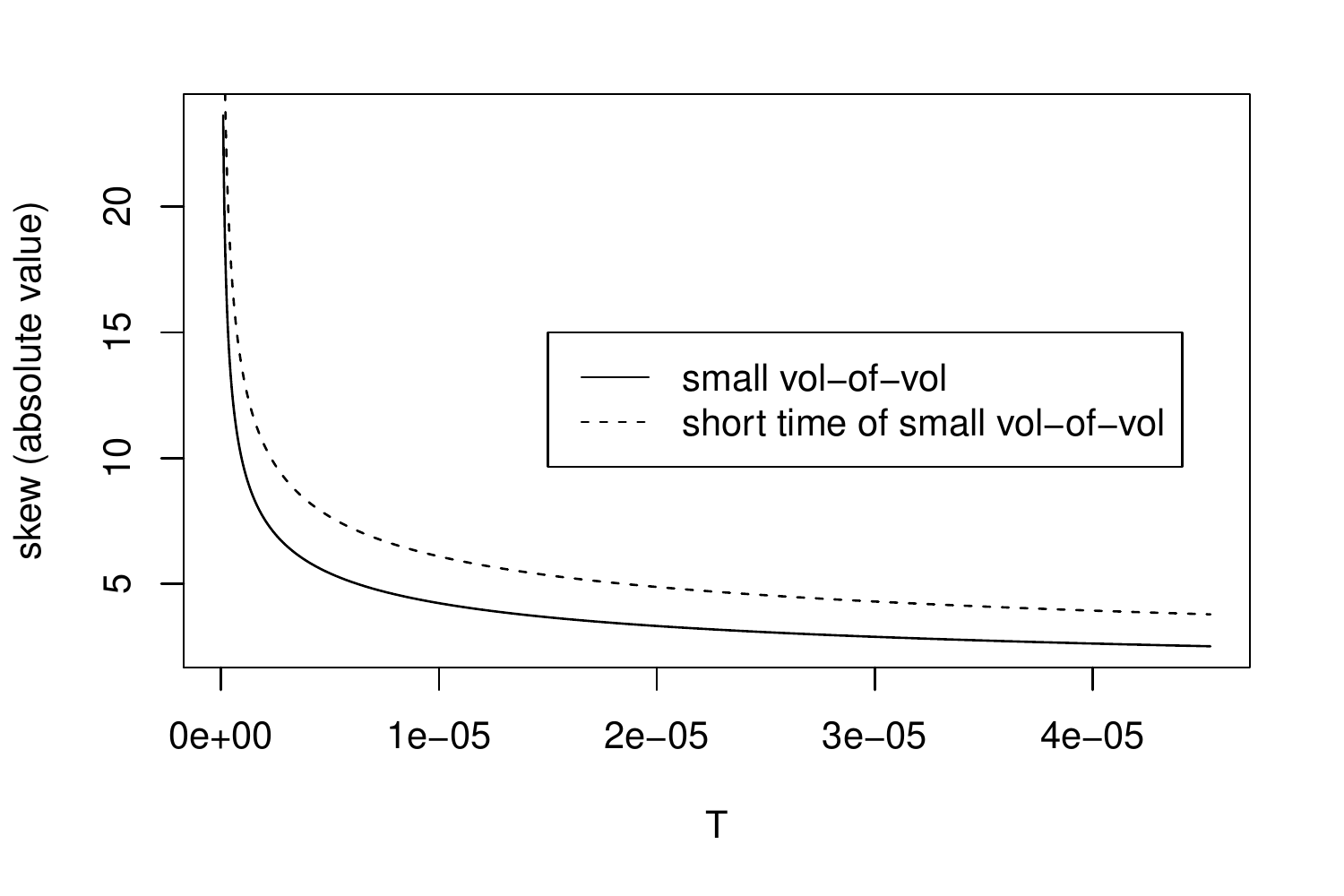}
   \caption{Short time expansion of the asymptotic formula for the ATM- skew for small vol-of-vol, see Theorem~\ref{thr:rbergomi} for the expansion in small vol-of-vol and Corollary~\ref{cor:rbergomi-short-skew} for its short-time expansion}. The parameters correspond to Figure~\ref{fig:skews_asymp_small-eta_H0_exact}.
    \label{fig:skew_asymptotic_eta_small_H0_short}
\end{figure}

Next we consider the short-time asymptotic of the asymptotic skew formula obtained in Theorem~\ref{thr:rbergomi} together with Corollary~\ref{cor:rbergomi-short-skew}, see Figure~\ref{fig:skew_asymptotic_eta_small_H0_short}. We should note that Theorem~\ref{thr:skew} only provides the short time asymptotic for $0<T<\chi = e^{-1/\zeta} \approx 5 \times 10^{-5}$ in our example. Hence, we need to zoom in very closely for the asymptotic formula to hold. The Figure indicates that the convergence of the short-time asymptotics is very slow.

\begin{figure}[H]
    \centering
    \includegraphics[width=\textwidth]{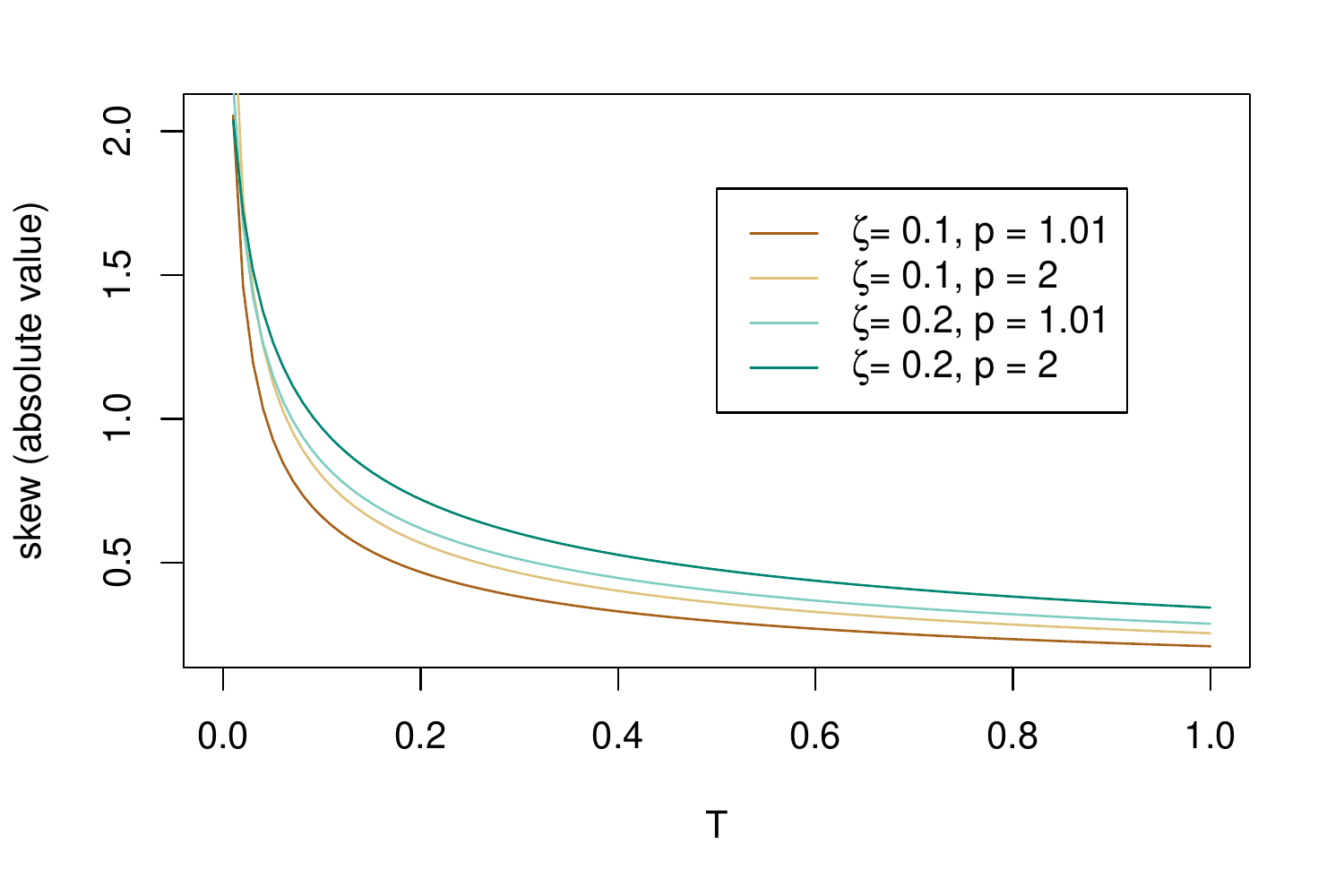}
    \caption{Asymptotic skew formulas for small vol-of-vol in the super-rough Bergomi model for different values of $\zeta$ and $p$. The remaining parameters are $H = 0$, $\rho = -0.7$, $\eta = 2$, $\xi \equiv 0.04$.}
    \label{fig:sberg_p_zeta}
\end{figure}

Coming back to the discussion of the additional parameters $\zeta$ and $p$ in Remarks~\ref{rem:parameter-choice} and~\ref{rem:zeta-p-vol-of-vol}, we compare the small vol-of-vol skew formulas of Theorem~\ref{thr:rbergomi} for different values of $\zeta$ and $p$, see Figure~\ref{fig:sberg_p_zeta}. Clearly, the absolute value of the ATM skew is increasing in both $\zeta$ and $p$, which indicates that one of these parameters could be easily removed -- by fixing it to a canonical value. In this case, we suggest to fix $p$ to a value close to $1$, such as $p = 1.01$ as used in the plot. 

Finally, we compare the super-rough Bergomi model with the standard rough Bergomi model. Figure~\ref{fig:skew_sberg_rberg} compares ATM-skews -- as computed by Monte Carlo simulation -- for both models and different values of $H$. As expected, the curves differ substantially for very small $H$, but move closely together for $H$ large. In this sense, the super-rough Bergomi model can be seen as a perturbation of the rough Bergomi model for $H \gg 0$, which is still well-defined in the limit $H = 0$ -- naturally departing from the rough Bergomi model in the process, i.e., as $H \to 0$.
\begin{figure}[H]
    \centering
    \includegraphics[width=\textwidth]{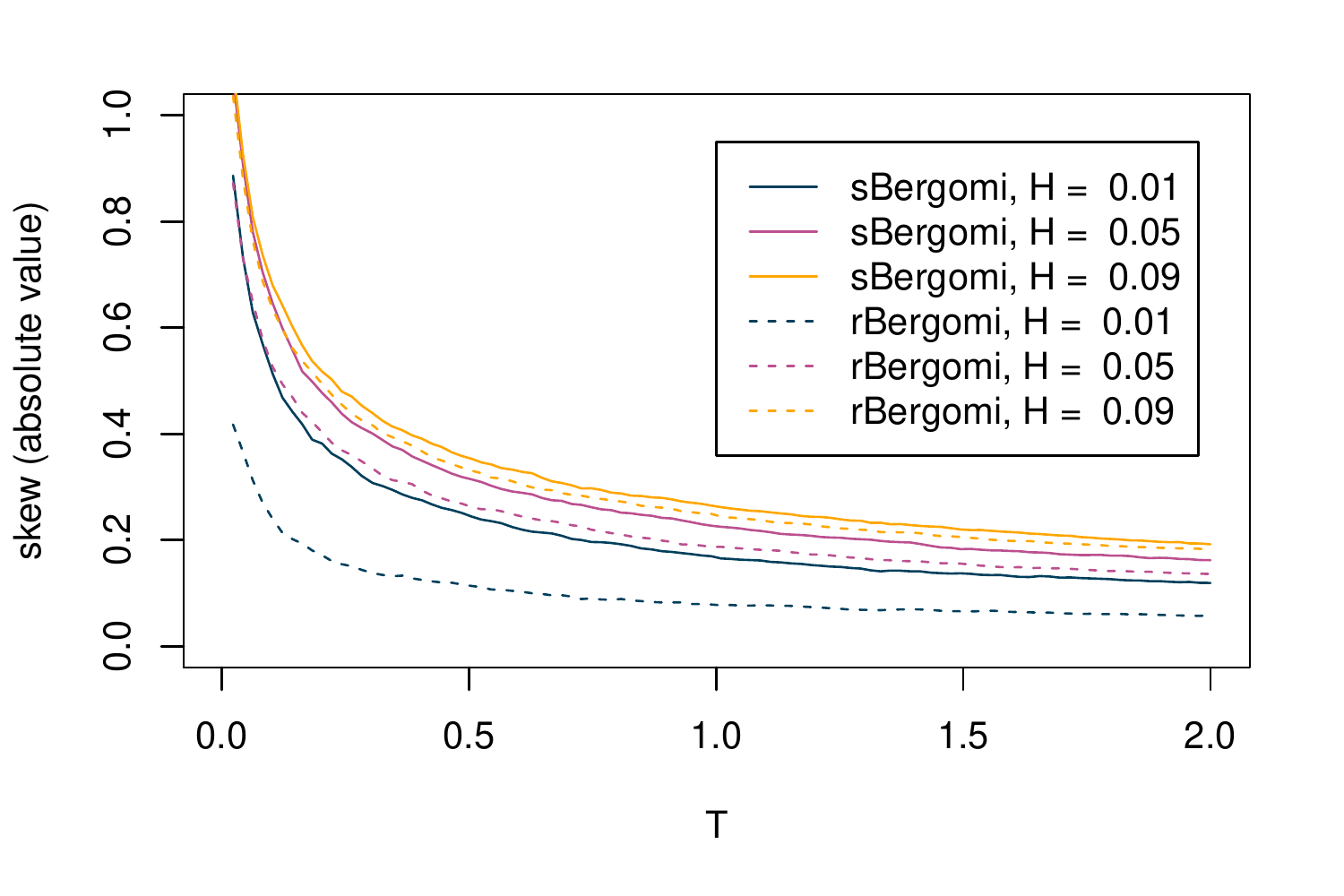}
    \caption{Comparisons between the ATM skews of a rough Bergomi model and a corresponding super-rough Bergomi model for $H \in \{0.01, 0.05, 0.09\}$. Skews are computed by Monte Carlo simulation. The remaining parameters are $\eta = 2$, $\rho = -0.7$, $\xi(t) \equiv 0.04$, and $\zeta = 0.1$, $p = 1.1$.}
    \label{fig:skew_sberg_rberg}
\end{figure}

\bibliographystyle{abbrv}
\bibliography{bib} 

\end{document}